\newtheorem{theorem}{Theorem}[section]
\newtheorem{definition}{Definition}[section]
\newtheorem{lemma}{Lemma}[section]
\title{Randomness Extraction  over Bilinear Character Sums}
\author{\\Boudjou T. Hortense,\scriptsize{\textit{Universite de Maroua-Cameroon }}; \\Dr Abdoul Aziz Ciss, \scriptsize{\textit{Ecole Polytechnique de Thies-Senegal }}}%;\\ Pr Kolyang, \scriptsize{\textit{Universite de Maroua-Cameroon }}}
\begin{document}
\maketitle
\begin{abstract}

This work is based on the proposal of a deterministic randomness extractor of a random Diffie-Hellman element defined over two prime order multiplicative subgroups of a finite fields $\mathbb{F}_{p^n}$, $G_1$ and $G_2$. We show that the least significant bits of a random element in $G_1*G_2$, are indistinguishable from a uniform bit-string of the same length.\\
One of the main application of this extractor is to replace the use of hash functions in pairing by the use of a good deterministic randomness extractor.\\ 

\textbf{Keywords}: Finite fields, elliptic curves, randomness extractor, key derivation, bilinear sums.

\end{abstract}

\section{Introduction}

The shared element after a Diffie-Helmann exchange is $g^{ab}\in G$, where $G$ is a cyclic subgroup of a finite field. $g^{ab}$ is indistinguishable from any other element of $G$ under the decisionnal Diffie-Hellman (DDH) assumption \cite{Boneh}. This hypothesis argues that, given two distribution $(g^a,g^b,g^{ab})$ and $(g^a,g^b,g^c)$ there is no efficient algorithm that can distinguish them. However, the encryption key should be indistinguishable from a random bit string having a uniform distribution.So we could not directly use $g^{ab}$ as the encryption key. 
It is therefore of adequate arrangements to ensure the indistinguishability of the key  such as hash functions, pseudo-random functions or random extractors. \\
Deterministic random extractor have been introduced in complexity theory by Trevisan and Vadhan \cite{Trevisan}.
Most of the work on deterministic extractors using exponential sums for their security proof work with simple exponential sums \cite{Boneh2, Chevalier, Ciss, Diffie,Fouque}. Here we introduce deterministic random extractors that extract a perfectly random bit string of an element derived from the combination of two separate source.\\

\textbf{Related work}

In 1998, Boneh et \textit{al.} \cite{Boneh2} show that calculate the $k$-\textit{most significant bits} of a secrete is also difficult as to calculate the common secret .The authors rely on \textit{Hidden Number Problem}.\\
 Hastad et \textit{al}. \cite{Hastad} propose random extractor based on the probabilistic \textit{Leftover Hash Lemma}, capable of removing all of the entropy random source having sufficient min-entropy. This technique and its variants, however, requires the use of hash functions and  perfect random.\\%, which do not match in pratice.\\

The particularity of these extractors is that they belong to the random oracle model. Thus, indistinguishability can not be proven under the DDH assumption unless you add a random oracle. However, these are considered some limitations in practice.\\
In 2008, Fouque  et \textit{al}. \cite{Fouque} propose a simple extractor capable of extracting the $k$ least significant bits or the $k$ most significant bits of a strong random element issued to the Diffie-Hellman exchange on a sufficent big subgroup of $\mathbb{Z}_p$. They rely on exponential sums to bound the statistical distance between two variable.\\
In 2009, Chevalier et \textit{al}. \cite{Chevalier} also use exponential sums but bound the collision probability of  bits extracted to prove the security of the extractor.They use the \textit{Vinogradov inequality}  to limit the incomplete character sums. They improve the results of Fouque by providing an extractor capable of extracting up to two times more bits. They also feature extractor on the group of points of an elliptic curve defined over a finite field. However, their work was limited to the  finite prime fields.\\
In 2011,Ciss et \textit{al}. \cite{Ciss} extend the work of Chevalier over finite non prime fields $\mathbb{F}_{p^n}$ and elliptic curves over $\mathbb{F}_{p^n}$ and more particularly on binary finite fields. They use the \textit{Winterhof inequality}  to limit the incomplete character sums.\\

All that previous work are based on the caracter model, using single character sums. we focus on the extraction of a random string of bits from a random element from multiple source in particular, two source.\\

\textbf{Our work}

We proposed a deterministic random extractor under the DDH asumption, which maps two multiplicative subgroup of a finite field to the set $\{0,1\}^k$, permitting to extract the $k$-least significant bits of a random element issue of the two subgroup. We use the double exponential sums to bound the collision probability and give a security proof of our extractor.\\

\textbf{Organization of work}\\
This work is organize as follow: In section 2, we recall some definition and results about randomness, character sums and bilinear character sums. In section 3, we present and analyze our randomness extractor. In section 4, we finish by giving some applications of our extractor.

\section{Preliminaries}{Measures of randomness}
In this section, we introduce some definitions and results on the measurement parameters of randomness \cite{Shoup} and on character sums.\\

\subsection{Measures of randomness}

\begin{definition}{ Guessing probability}\\
Let $\mathcal{X}$ be a set of cardinality $|\mathcal{X}|$ and $X$, an $\mathcal{X}$-valued random variable.\\ 
The \textit{guessing probability} $ \gamma (X) $ of $ X $ is given by: 
\begin{center}$\displaystyle \gamma(X) = max\{P[X = v]: v \in \mathcal{X}\} $ \end{center}
\end{definition}

\begin{definition}{ collision probability }\\
Let $\mathcal{X}$ be a finite set and $X$, an $\mathcal{X}$-valued random variable. The collision probability of $X$, denoted by $Col(X)$, is the probability 
\begin{center}$ \displaystyle Col(X)=Pr[X=X']=\sum_{x\in\mathcal{X}}Pr[X=x]^2$\end{center}
\end{definition}

\begin{definition}{ Statistical distance }\\
Let $\mathcal{X}$ be a finite set. If $X$ and $Y$ are $\mathcal{X}$-valued random variables, then the statistical distance $SD(X,Y )$ between $X$ and $Y$ is defined as
\begin{center}$\displaystyle SD(X,Y)=\frac{1}{2}\sum_{x\in\mathcal{X}}|Pr[X=x]-Pr[Y=x]|$\end{center}
\end{definition}

%\begin{definition}\\
 Let $U_\mathcal{X}$ be a random variable uniformly distributed on $\mathcal{X}$ and $\delta\leq 1$ a positive real number. Then a random variable $X$ on $\mathcal{X}$ is said to be $\delta-uniform$ if 
 $$SD(X,U_\mathcal{X})\leq\delta$$
%\end{definition}

\begin{lemma}{Relation between SD and Col(X)}\\
\label{lem1}
Let $X$ be a random variable over a finite set $\mathcal{X}$ of size $|\mathcal{X}|$ and $\Delta = SD(X,U_S)$ be the statistical distance between $X$ and $U_\mathcal{X}$, where $U_\mathcal{X}$ is a uniformly distributed random variable over $\mathcal{X}$. Then,

\begin{center}$\displaystyle Col(X)\geq\frac{1+4\Delta^2}{|\mathcal{X}|}$\end{center}
\end{lemma}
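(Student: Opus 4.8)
The plan is to parametrise the distribution of $X$ by its deviation from uniform and then reduce the inequality to a single application of Cauchy--Schwarz. Write $n = |\mathcal{X}|$ and, for each $x \in \mathcal{X}$, set $p_x = Pr[X = x]$. Since the uniform variable $U_\mathcal{X}$ assigns mass $1/n$ to every point, I would introduce the deviations $\epsilon_x = p_x - \tfrac{1}{n}$. Because both $X$ and $U_\mathcal{X}$ are probability distributions, we have $\sum_{x} \epsilon_x = \sum_x p_x - 1 = 0$, and by the definition of statistical distance $\Delta = \tfrac{1}{2}\sum_x |\epsilon_x|$, equivalently $\sum_x |\epsilon_x| = 2\Delta$.

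Next I would expand the collision probability in these coordinates. Using the definition of $Col$,
\[
Col(X) = \sum_x p_x^2 = \sum_x \left(\tfrac{1}{n} + \epsilon_x\right)^2 = \frac{1}{n} + \frac{2}{n}\sum_x \epsilon_x + \sum_x \epsilon_x^2 = \frac{1}{n} + \sum_x \epsilon_x^2,
\]
where the cross term vanishes precisely because $\sum_x \epsilon_x = 0$. Thus the problem reduces to the single estimate $\sum_x \epsilon_x^2 \geq \tfrac{4\Delta^2}{n}$.

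For this last bound I would apply the Cauchy--Schwarz inequality to the vectors $(|\epsilon_x|)_{x\in\mathcal{X}}$ and $(1,\ldots,1)$, which gives $\left(\sum_x |\epsilon_x|\right)^2 \leq n \sum_x \epsilon_x^2$. Substituting $\sum_x |\epsilon_x| = 2\Delta$ yields $\sum_x \epsilon_x^2 \geq \tfrac{(2\Delta)^2}{n} = \tfrac{4\Delta^2}{n}$, and combining this with the expansion above gives $Col(X) \geq \tfrac{1 + 4\Delta^2}{n}$, as claimed.

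I do not expect a genuine obstacle here: the whole argument is a short algebraic identity followed by one inequality. The only point requiring care is directional --- Cauchy--Schwarz must be used to bound the $\ell_1$ norm of the deviations above by their $\ell_2$ norm (so that a \emph{lower} bound on the collision probability emerges), rather than the reverse. A useful sanity check is that the inequality is tight exactly when all nonzero $|\epsilon_x|$ are equal, which confirms the constant $4$.
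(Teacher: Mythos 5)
Your proof is correct and complete: the decomposition $p_x = \tfrac{1}{n}+\epsilon_x$, the vanishing cross term, and the Cauchy--Schwarz step $\left(\sum_x |\epsilon_x|\right)^2 \leq n\sum_x \epsilon_x^2$ together give exactly the claimed bound. The paper itself states this lemma without any proof (it is a standard fact imported from the prior literature on these extractors), and your argument is the standard derivation one would supply; the only nitpick is your closing tightness remark --- equality in Cauchy--Schwarz against the all-ones vector requires \emph{all} the $|\epsilon_x|$ to be equal, not merely the nonzero ones --- but that aside does not affect the validity of the proof.
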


\begin{definition}{Deterministic ($\mathcal{Y},\delta$)-extractor}\\
Let $\mathcal{X}$ and $\mathcal{Y}$ be two finite sets. Let $Ext$ be a function Ext : $\mathcal{X} \to \mathcal{Y}$. We say that $Ext$ is a deterministic ($\mathcal{Y},\delta$)-extractor for $\mathcal{X}$ if $Ext(U_\mathcal{X})$ is $\delta$-uniform on $\mathcal{Y}$. That is\\
\begin{center}$\displaystyle SD(Ext(U_\mathcal{X}),U_\mathcal{Y})\leq\delta $\end{center}
\end{definition}

\begin{definition}{Two-sources-extractor}\\
Let $\mathcal{X}$, $\mathcal{Y}$ and  $\mathcal{Z}$ be finite sets. The function\\
$F: \mathcal{X}\mathrm{x}\mathcal{Y}\to\mathcal{Z}$ is a two-sources-extractor if the distribution $F(X,Y)$ is $\delta$-close to the uniform distribution $U_Z\in\mathcal{Z}$ for every uniformly distributed random variables $X\in\mathcal{X}$ and $Y\in\mathcal{Y}$
\end{definition}

\subsection{Characters}
\begin{definition}
Let $G$ be an abelian group. A character of $G$ is a homomorphism from $G \to \mathbb{C}^*$. A character is trivial if it is identically 1. We denote the trivial character by $\mathcal{X}_0$ or $\psi_0$. 
\end{definition}
\begin{definition}
 Let $\mathbb{F}_q$ be a given finite field. An additive character $\psi : \mathbb{F}_q^+ \to \mathbb{C}$ is a character $\psi$ with $\mathbb{F}_q$ considered as an additive group. A multiplicative character $\mathcal{X} : \mathbb{F}_q^* \to \mathbb{C}$ is a character with $\mathbb{F}_q^* = \mathbb{F}_q - \{0\}$ considered as a multiplicative group. We extend $\mathcal{X}$ to $\mathbb{F}_q$ by defining $\mathcal{X}(0) = 1$ if $\mathcal{X}$ is trivial, and $\mathcal{X}(0) = 0$ otherwise. Note that the extended $\mathcal{X}$ still preserves multiplication. 
\end{definition}

\subsection {Exponential sums over finite fields}

The main interests of exponential sums is that they allows to construct some caracteristic functions and in some cases we know good bounds for them. The use of these caracteristic functions can permit to evaluate the size of these sets.\\

We focus on certain character sums, those involving the character  $e_p$ define as it follows.\\ 

\begin{theorem}{Multiplicative characters of $\mathbb{F}_q$}\\
The multiplicative characters of $\mathbb{F}_q$ are given by: \\
$\displaystyle \forall x\in\mathbb{F}_q$, $e_q(x) = e^{\frac{2i\pi x}{q}}\in \mathbb{C}^*$
 \end{theorem}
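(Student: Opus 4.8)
The plan is to reduce everything to the structure of the underlying cyclic group and then read off the characters explicitly. First I would fix precisely which group the character acts on: taking $q=p$ prime (the case in which the displayed formula is literally meaningful, since writing $e^{2\pi i x/q}$ requires identifying field elements with residues modulo $q$), the additive group $\mathbb{F}_q^+=\mathbb{Z}/q\mathbb{Z}$ is cyclic of order $q$, generated by $1$. A character $\psi$ is by definition a homomorphism $\mathbb{F}_q^+\to\mathbb{C}^*$, hence it is determined by the single value $\psi(1)$, since $\psi(x)=\psi(1+\cdots+1)=\psi(1)^x$ for every $x$.

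Next I would pin down the admissible values of $\psi(1)$. Because $1$ has order $q$ in $\mathbb{F}_q^+$, we get $\psi(1)^q=\psi(q\cdot 1)=\psi(0)=1$, so $\psi(1)$ is a $q$-th root of unity, i.e. $\psi(1)=e^{2\pi i a/q}$ for a unique $a\in\{0,1,\dots,q-1\}$. Substituting back yields $\psi(x)=e^{2\pi i ax/q}=e_q(x)^a$, which is exactly the claimed form, with the displayed $e_q$ arising as the case $a=1$. The converse direction, that each such formula genuinely defines a character, is the routine verification that $e_q(x+y)=e_q(x)e_q(y)$ and that the exponent is well defined modulo $q$; a final count then shows there are exactly $q$ such maps, matching $|\widehat{\mathbb{F}_q^+}|=q$, so the list is complete.

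The step I expect to be the real obstacle is making the statement correct beyond the prime case, and reconciling the label ``multiplicative'' with the displayed additive formula. For a genuine prime power $q=p^n$ with $n>1$, the additive group $\mathbb{F}_q^+$ is $(\mathbb{Z}/p\mathbb{Z})^n$, which is not cyclic, so the single-generator argument collapses; there the correct description is $\psi_a(x)=e_p\bigl(\mathrm{Tr}_{\mathbb{F}_q/\mathbb{F}_p}(ax)\bigr)$, obtained by composing the prime-field characters with the trace down to $\mathbb{F}_p$, and one must verify that $\mathrm{Tr}_{\mathbb{F}_q/\mathbb{F}_p}$ is a surjective $\mathbb{F}_p$-linear form in order to recover all $q$ distinct characters. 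For the \emph{multiplicative} characters of $\mathbb{F}_q^*$ the relevant group is cyclic but of order $q-1$, so the exponents would carry $q-1$ rather than $q$ in the denominator. Thus a fully careful version of this theorem really splits into the additive prime-field case stated here and these two extensions, and the main effort lies in keeping these three regimes, and the additive-versus-multiplicative distinction, straight.
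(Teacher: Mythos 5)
The paper offers no proof of this statement at all: it is labelled a theorem but in effect serves as the definition of the map $e_q$, and the companion theorem on additive characters (via the trace) is likewise left unproved, so there is nothing in the paper to compare your argument against. What you give is the standard and correct classification of the characters of the cyclic group $\mathbb{Z}/q\mathbb{Z}$: a homomorphism is determined by its value on the generator $1$, that value must be a $q$-th root of unity, each choice $e^{2\pi i a/q}$ does define a character, and counting shows the list is complete. More valuably, you correctly diagnose the defects of the statement as written: the map $x\mapsto e^{2\pi i x/q}$ is an \emph{additive} character (a homomorphism of $\mathbb{F}_q^+$, not of $\mathbb{F}_q^*$), so the word ``multiplicative'' in the statement is simply wrong; the formula is only literally meaningful when $q$ is prime, since for $q=p^n$ with $n>1$ the elements of $\mathbb{F}_q$ are not residues modulo $q$ and $\mathbb{F}_q^+$ is not cyclic; and genuine multiplicative characters of $\mathbb{F}_q^*$ would involve $q-1$ rather than $q$ in the denominator, mediated by a choice of generator and discrete logarithm. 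The paper itself implicitly concedes the prime-power point by introducing $\psi(x)=e_p(\mathrm{Tr}(x))$ in the very next theorem. In short, your proof is correct for the only reasonable reading of the statement (the additive characters of the prime field), and the issues you flag are errors in the statement itself, not gaps in your argument.
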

 
 \begin{theorem}{Additive characters of $\mathbb{F}_q$}\\
 Suppose $q = p^r$ with $p$ prime. The additive characters of $\mathbb{F}_q$ are given by \\
 $\displaystyle\psi(x) = e_p(Tr(x))$ where $Tr(x) = x + x^p + ... + x^{p^{n-1}}$ is the trace of $x$.
 \end{theorem}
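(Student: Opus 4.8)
The plan is to establish the statement in two movements: first verify that the proposed map is genuinely an additive character, and then count to confirm that the family it generates exhausts the dual group of $\mathbb{F}_q^+$. First I would check that $\psi(x) = e_p(\mathrm{Tr}(x))$ is a homomorphism from $\mathbb{F}_q^+$ into $\mathbb{C}^*$. The essential ingredient is the additivity of the trace: for all $x,y \in \mathbb{F}_q$ one has $\mathrm{Tr}(x+y) = \mathrm{Tr}(x) + \mathrm{Tr}(y)$, which follows term by term from the identity $(x+y)^{p^{j}} = x^{p^{j}} + y^{p^{j}}$ valid in characteristic $p$. Since $e_p(a+b) = e_p(a)e_p(b)$ for the exponential $e_p(t) = e^{2i\pi t/p}$, I obtain $\psi(x+y) = e_p(\mathrm{Tr}(x)+\mathrm{Tr}(y)) = \psi(x)\psi(y)$, so $\psi$ is an additive character. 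One also checks that $\mathrm{Tr}(x) \in \mathbb{F}_p$ (via $\mathrm{Tr}(x)^{p} = \mathrm{Tr}(x)$, using $x^{p^{r}} = x$), so that $e_p$ is well defined on the trace value.

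Next I would produce the full family by twisting: for each $b \in \mathbb{F}_q$ set $\psi_b(x) = \psi(bx) = e_p(\mathrm{Tr}(bx))$. Each $\psi_b$ is again an additive character by the same computation. The key claim is that the $q$ maps $\{\psi_b : b \in \mathbb{F}_q\}$ are pairwise distinct; for then, since the character group of the finite abelian group $\mathbb{F}_q^+$ has exactly $q$ elements, these must be all of them. Distinctness reduces to showing that $\psi_b$ equal to the trivial character $\psi_0$ forces $b = 0$, i.e. that $\mathrm{Tr}(bx) = 0$ for every $x \in \mathbb{F}_q$ implies $b = 0$.

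The main obstacle is precisely this non-degeneracy step. I would handle it by first proving that the trace is not identically zero: the map $\mathrm{Tr}(x) = x + x^{p} + \cdots + x^{p^{r-1}}$ is a polynomial of degree $p^{r-1}$, which is strictly less than $q = p^{r}$, so it cannot vanish at all $q$ elements of $\mathbb{F}_q$ (a nonzero polynomial of degree $d$ has at most $d$ roots). Hence $\mathrm{Tr}$ is a nonzero $\mathbb{F}_p$-linear functional, and being nonzero into the one-dimensional space $\mathbb{F}_p$ it is surjective. If now $b \neq 0$, the substitution $x \mapsto b^{-1}x$ shows that $x \mapsto \mathrm{Tr}(bx)$ takes exactly the same set of values as $\mathrm{Tr}$, hence is also nonzero, so it cannot vanish for all $x$. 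This establishes that the bilinear form $(b,x) \mapsto \mathrm{Tr}(bx)$ is non-degenerate, which completes the distinctness argument and identifies $\{\psi_b : b \in \mathbb{F}_q\}$ as the complete set of additive characters of $\mathbb{F}_q$, with $\psi = \psi_1$ the canonical one named in the statement.
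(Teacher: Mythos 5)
Your proof is correct and complete. The paper itself states this theorem without any proof (unlike the neighbouring results, it is not even accompanied by a citation), so there is no argument of the authors to compare against; what you give is the standard self-contained derivation: additivity of the trace via the Frobenius identity $(x+y)^{p^j}=x^{p^j}+y^{p^j}$, the fact that $\mathrm{Tr}(x)\in\mathbb{F}_p$ so that $e_p(\mathrm{Tr}(x))$ makes sense, and then the counting argument that the $q$ twists $\psi_b(x)=e_p(\mathrm{Tr}(bx))$ are pairwise distinct and therefore exhaust the dual group. The one step that genuinely needs care --- non-degeneracy of $(b,x)\mapsto\mathrm{Tr}(bx)$ --- you handle correctly by the degree bound $\deg(\mathrm{Tr})=p^{r-1}<q$. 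Note also that you silently repair two defects of the statement as printed: the exponent $p^{n-1}$ should read $p^{r-1}$ (the paper's $n$ is a typo for $r$), and the literal claim that ``the additive characters are given by $\psi(x)=e_p(\mathrm{Tr}(x))$'' only names a single character; your reformulation in terms of the full family $\{\psi_b\}_{b\in\mathbb{F}_q}$ is what the theorem actually means and what the rest of the paper uses.
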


\subsubsection{Single character sums} 

 Let $p$ be a prime number, $G$ a multiplicative subgroup of $\mathbb{F}_{p}^*$ .\\
 For all $a\in \mathbb{F}_{p^*}$, let introduce the following notation:
 
\begin{center}$\displaystyle S(a,G)=\sum_{x\in G}e_p(ax)$\end{center}.

\begin{lemma} {Let} $p$ be a prime number, $G$ a multiplicative subgroup of $\mathbb{F}_{p}^*$ .\\

(1)\qquad if $a=0$, \qquad $\sum_{x=0}^{p-1}e_p(ax)=p$\\

(2)\qquad For all $a\in\mathbb{F}_p^*$,\qquad $\sum_{x=0}^{p-1}e_p(ax)=0$\\

(3)\qquad For all $x_0\in G$ and all $a\in\mathbb{F}_p^*$, $S(ax_0,G)=S(a,G)$\\
\label{lem2}
\end{lemma}
\begin{proof} Follows \cite{Zimmer}, pp69-70 \end{proof}

\begin{theorem}{Polya-Vinogradov bound}\\
Let $p$ be a prime number, $G$ a multiplicative subgroup of $\mathbb{F}_{p}^*$ .\\
For all $a\in\mathbb{F}_p^*$:
\begin{center}$\displaystyle \left|\sum_{x\in G}e_p(ax)\right|\leq \sqrt{p}$\end{center}
\end{theorem}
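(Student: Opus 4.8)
The plan is to bound $|S(a,G)|^2$ by a second-moment (L2) argument combined with the coset-invariance already recorded in Lemma~\ref{lem2}(3). Write $t = |G|$. First I would compute the total second moment over all of $\mathbb{F}_p$: expanding the square gives
$$\sum_{a\in\mathbb{F}_p}|S(a,G)|^2 = \sum_{x,y\in G}\sum_{a\in\mathbb{F}_p} e_p(a(x-y)),$$
and the inner sum collapses by the orthogonality relations of Lemma~\ref{lem2}(1) and (2), which return $p$ when $x=y$ and $0$ otherwise. Hence the total equals $p\,|G| = pt$. Peeling off the $a=0$ term, where $S(0,G)=|G|=t$ contributes $t^2$, leaves
$$\sum_{a\in\mathbb{F}_p^*}|S(a,G)|^2 = pt - t^2 = t(p-t).$$

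The decisive step is to exploit that $|S(a,G)|$ is constant on each multiplicative coset $aG$: by Lemma~\ref{lem2}(3), $S(ax_0,G)=S(a,G)$ for every $x_0\in G$, so the map $a\mapsto|S(a,G)|^2$ takes a single value on each of the $(p-1)/t$ cosets of $G$ in $\mathbb{F}_p^*$, every coset having size $t$. Grouping the sum by cosets therefore yields
$$t\sum_{\text{cosets } C}|S(a_C,G)|^2 = t(p-t), \qquad \text{that is} \qquad \sum_{\text{cosets } C}|S(a_C,G)|^2 = p-t,$$
where $a_C$ denotes any representative of the coset $C$.

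Finally, since every summand $|S(a_C,G)|^2$ is nonnegative and the cosets sum to $p-t$, each individual term is at most $p-t$. As every nonzero $a$ lies in some coset, this gives $|S(a,G)|^2 \le p-t < p$, hence $|S(a,G)| \le \sqrt{p}$, as claimed.

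I expect the only conceptual obstacle to be recognizing that the coset reduction is what upgrades the averaged L2 estimate into a pointwise bound: the raw second moment $t(p-t)$ is far larger than $p$, so bounding a single term by the full sum would be hopeless. It is precisely the translation invariance of Lemma~\ref{lem2}(3), which lets us divide out the factor $t$ and pass to one representative per coset, that makes every term comparable and forces each one below $p$.
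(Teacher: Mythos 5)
Your proof is correct, and it is worth noting that the paper itself offers no argument here at all --- it simply defers to the reference \cite{Zimmer} --- so yours is the only self-contained proof on the table. The second-moment computation $\sum_{a\in\mathbb{F}_p}|S(a,G)|^2 = p|G|$ via orthogonality, the removal of the $a=0$ term contributing $|G|^2$, and the reduction to one representative per multiplicative coset using Lemma~\ref{lem2}(3) are all carried out correctly (the needed fact that $|G|$ divides $p-1$ is automatic since $\mathbb{F}_p^*$ is cyclic). This is in fact the classical argument for character sums over multiplicative subgroups, and it uses nothing beyond the orthogonality relations and the coset invariance already recorded in the paper. Your final inequality $|S(a,G)|^2 \le p - |G|$ is strictly sharper than the stated bound $\sqrt{p}$, which is a small bonus rather than a defect. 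Your closing remark correctly identifies the crux: the raw second moment $|G|(p-|G|)$ over all nonzero $a$ is useless pointwise, and it is only after dividing by the coset size $|G|$ that a single term can be isolated and bounded by $p-|G|$.
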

\begin{proof}{See \cite{Zimmer} for the proof}\end{proof}

\begin{theorem}{Winterhof bound}\\
\label{winter}
Let $V$ be an additive subgroup of $\mathbb{F}_{p^n}$ and let $\psi$ be an additive caracter of $\mathbb{F}_{p^n}$. Then\\
\begin{center}$ \displaystyle \sum_{a\in \mathbb{F}_{p^n}}\left|\sum_{x\in V}\psi(ax)\right|\leq p^n$ \end{center}
\end{theorem}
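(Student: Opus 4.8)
The plan is to evaluate the double sum directly by interchanging the order of summation, exploiting the fact that the inner sum is a character sum over the group $V$. First I would observe that, since $V$ is an additive subgroup of $\mathbb{F}_{p^n}$, it is an $\mathbb{F}_p$-subspace, and that for each fixed $a$ the map $x \mapsto \psi(ax)$ is an additive character of the finite group $(V,+)$, because $\psi$ is additive and $\psi(a(x+y)) = \psi(ax)\psi(ay)$. By the orthogonality of characters on $V$, the inner sum $S(a) := \sum_{x\in V}\psi(ax)$ equals $|V|$ when this character is trivial on $V$ and $0$ otherwise. The crucial consequence is that every $S(a)$ is a nonnegative real number, so the absolute values are redundant and $\sum_{a}\bigl|S(a)\bigr| = \sum_{a} S(a)$.

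Next I would compute $\sum_a S(a)$ by Fubini, writing $\sum_{a\in\mathbb{F}_{p^n}} S(a) = \sum_{x\in V}\sum_{a\in\mathbb{F}_{p^n}} \psi(ax)$. Here I use the orthogonality relation over the whole field, the analogue over $\mathbb{F}_{p^n}$ of Lemma \ref{lem2}(1)--(2): assuming $\psi$ is nontrivial, for $x \neq 0$ the map $a \mapsto ax$ is a bijection of $\mathbb{F}_{p^n}$, so $\sum_{a}\psi(ax) = \sum_{y\in\mathbb{F}_{p^n}}\psi(y) = 0$, while for $x = 0$ one gets $\sum_a \psi(0) = p^n$. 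Since $0 \in V$, only the term $x=0$ survives, giving $\sum_a S(a) = p^n$. Combining this with the previous paragraph yields $\sum_{a}\bigl|S(a)\bigr| = p^n$, which is in fact the equality form of the stated inequality.

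The main point to be careful about is not a calculation but a hypothesis: the argument, and the stated bound, require $\psi$ to be nontrivial. For the trivial character one has $S(a) = |V|$ for every $a$, so the left-hand side equals $p^n|V|$, which exceeds $p^n$ whenever $|V|>1$; I would therefore state the theorem for nontrivial $\psi$ (or explicitly exclude the trivial case). The only other step deserving a word of justification is the nonnegativity of $S(a)$, which is exactly what permits passing from the sum of absolute values to the sum itself; this rests entirely on the subgroup structure of $V$, since for an arbitrary subset the inner sum would be genuinely complex and the clean equality would break down.
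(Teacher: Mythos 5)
Your proof is correct, and it is more than the paper provides: the paper does not prove this theorem at all, it simply cites Winterhof's article. Your observation that an additive subgroup $V$ of $\mathbb{F}_{p^n}$ is an $\mathbb{F}_p$-subspace, so that $S(a)=\sum_{x\in V}\psi(ax)$ is $|V|$ or $0$ by orthogonality and hence nonnegative, reduces the claimed inequality to the exact identity $\sum_a S(a)=p^n$, which follows from orthogonality over the whole field. This is genuinely simpler than the general Winterhof inequality, which is designed for subsets $V$ that are \emph{not} subgroups (e.g.\ boxes of coordinates in a fixed basis), where the inner sums are honestly complex and the absolute values matter; for a subgroup, as stated here, your two-line orthogonality argument is all that is needed and even yields equality. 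You are also right to flag the missing hypothesis: as written the statement is false for the trivial character $\psi_0$ whenever $|V|>1$ (the left-hand side is then $p^n|V|$), so nontriviality of $\psi$ must be assumed --- and indeed the theorem is only invoked later in the paper (in the proof of Lemma \ref{lem6}) for the sum over $a\in\mathbb{F}_{p^n}^*$ with a fixed nontrivial $\psi$, where your version suffices.
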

\begin{proof}{See \cite{Winterhof} for the proof}\end{proof}

\subsubsection{Bilinear character sums}

Let $p$ be a prime number, $G, H$ be two multiplicative subgroups of $\mathbb{F}_{p}^*$ .\\
 For all $a\in \mathbb{F}_{p^*}$, let introduce the following notation:
\begin{center}$\displaystyle S(a,(G,H))=\sum_{x\in G}\sum_{y\in H}e_p(axy)$\end{center}

\begin{lemma} {Let} $p$ be prime and, $G$ and $H$ two subsets of $\mathbb{F}_{p}^*$. Then\\ 
\begin{center}$\displaystyle \max_{(n,p)=1}|\sum_{x\in G}\sum_{y\in H}(e_p(nxy))|\leq (p|G||H|)^{\frac{1}{2}}$\end{center}
\label{lem3}
 \end{lemma}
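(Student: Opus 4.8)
The plan is to reduce the bilinear sum to a single-variable square via the Cauchy--Schwarz inequality, and then exploit the orthogonality relations for the additive character $e_p$ recorded in Lemma \ref{lem2}. Fix an integer $n$ with $(n,p)=1$ and write $T = \sum_{x\in G}\sum_{y\in H}e_p(nxy)$, regarding it as $\sum_{x\in G}\phi(x)$ where $\phi(x)=\sum_{y\in H}e_p(nxy)$. Since the bound must hold uniformly over all admissible $n$, it suffices to bound $|T|$ for this arbitrary fixed $n$.

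First I would apply Cauchy--Schwarz to the outer sum over the $|G|$ values of $x$, giving $|T|^2 \le |G|\sum_{x\in G}|\phi(x)|^2$. The crucial step is then to enlarge the range of summation from $x\in G$ to all $x\in\mathbb{F}_p$; since every summand $|\phi(x)|^2$ is nonnegative, this only increases the right-hand side, and it is exactly what makes the orthogonality relation applicable. Thus $|T|^2 \le |G|\sum_{x\in\mathbb{F}_p}|\phi(x)|^2$.

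Next I would expand the square as $|\phi(x)|^2 = \sum_{y\in H}\sum_{y'\in H} e_p(nx(y-y'))$ and interchange the order of summation to isolate the inner sum $\sum_{x\in\mathbb{F}_p} e_p(nx(y-y'))$. By parts (1) and (2) of Lemma \ref{lem2}, this inner sum equals $p$ when $n(y-y')\equiv 0 \pmod p$ and $0$ otherwise; because $(n,p)=1$ this forces $y=y'$, so only the $|H|$ diagonal terms survive and $\sum_{x\in\mathbb{F}_p}|\phi(x)|^2 = p|H|$. Combining the two displays yields $|T|^2 \le |G|\cdot p|H| = p|G||H|$, and taking square roots gives the claimed bound.

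I do not expect a genuine obstacle here, as the argument is entirely standard; the only points requiring care are the extension of the summation range, which must be justified by the nonnegativity of the terms rather than by any structural property of $G$ or $H$, and the observation that the coprimality condition $(n,p)=1$ is precisely what collapses the double sum over $H$ to its diagonal.
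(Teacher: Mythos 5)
Your proof is correct and complete. Note, however, that the paper does not actually prove this lemma at all: its ``proof'' is a bare citation to Bourgain--Garaev and Vinogradov. So your argument is not so much a different route as the only route on display. For the record, the comparison is this: the elementary chain you use --- Cauchy--Schwarz on the outer variable, completion of the sum over $x$ from $G$ to all of $\mathbb{F}_p$ justified by nonnegativity, then orthogonality of $e_p$ collapsing the $H\times H$ sum to its diagonal --- is exactly the standard derivation of the $(p|G|\,|H|)^{1/2}$ bound for arbitrary subsets, and it is self-contained modulo Lemma \ref{lem2}. The cited references are really aimed at the harder problem of beating this bound when $G$ and $H$ are multiplicative subgroups (via sum--product estimates); for the statement as written, with $G$ and $H$ merely subsets, your two-paragraph argument is both sufficient and sharper in spirit than an appeal to that machinery. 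Your two points of care --- that the range extension relies only on nonnegativity and not on any structure of $G$, and that $(n,p)=1$ is what forces $y=y'$ --- are exactly the right ones, and the diagonal count $|H|$ is correct since $H$ is a set of distinct elements. The only cosmetic remark is that you could note explicitly that $|T|\le(p|G|\,|H|)^{1/2}$ holds for every $n$ coprime to $p$, hence for the maximum over such $n$, which is how the lemma is phrased.
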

\begin{proof} See \cite{Bourgain,Vinogradov}\end{proof}

\begin{lemma} {For} any subsets $G$, $H$ of $\mathbb{F}_{p^n}^*$ and for any complex coefficients $\alpha_x, \beta_y$ with $|\alpha_x|\leq 1$, $|\beta_y|\leq 1$, the following bound holds\\
\begin{center}$\displaystyle |\sum_{x\in G}\sum_{y\in H}\alpha_x \beta_y\psi(xy)|\leq (p^n|G||H|)^{\frac{1}{2}}$\end{center}
\label{lem5}
\end{lemma}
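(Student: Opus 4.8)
The plan is to reduce this bilinear estimate to a one-variable mean-square computation by a single application of the Cauchy--Schwarz inequality, after which the orthogonality of additive characters finishes the job. Writing $W=\sum_{x\in G}\sum_{y\in H}\alpha_x\beta_y\psi(xy)$, I would first group the inner sum over $y$ and treat $\alpha_x$ as the weight attached to $x$, so that $W=\sum_{x\in G}\alpha_x\,T(x)$ with $T(x)=\sum_{y\in H}\beta_y\psi(xy)$.

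Applying Cauchy--Schwarz in the variable $x$ and using $|\alpha_x|\leq 1$ gives
\[
|W|^2\leq\Big(\sum_{x\in G}|\alpha_x|^2\Big)\Big(\sum_{x\in G}|T(x)|^2\Big)\leq |G|\sum_{x\in G}|T(x)|^2 .
\]
Since $|T(x)|^2\geq 0$, I can only increase the second factor by completing the outer sum to all of $\mathbb{F}_{p^n}$, that is $\sum_{x\in G}|T(x)|^2\leq\sum_{x\in\mathbb{F}_{p^n}}|T(x)|^2$. This completion is the crucial manoeuvre: it replaces the arbitrary, unstructured set $G$ by the full field, at which point the sum becomes exactly evaluable and the coefficients $\alpha_x$ disappear from the analysis.

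Expanding the square and exchanging the order of summation then yields
\[
\sum_{x\in\mathbb{F}_{p^n}}|T(x)|^2=\sum_{y\in H}\sum_{y'\in H}\beta_y\overline{\beta_{y'}}\sum_{x\in\mathbb{F}_{p^n}}\psi\big(x(y-y')\big),
\]
where I have used $\overline{\psi(xy')}=\psi(-xy')$ together with the additivity of $\psi$. The inner character sum is the $\mathbb{F}_{p^n}$-analogue of parts (1) and (2) of Lemma \ref{lem2}: by orthogonality it equals $p^n$ when $y=y'$ and vanishes otherwise. Only the diagonal $y=y'$ survives, so the expression collapses to $p^n\sum_{y\in H}|\beta_y|^2\leq p^n|H|$ by $|\beta_y|\leq 1$. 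Combining with the Cauchy--Schwarz bound gives $|W|^2\leq p^n|G||H|$, and taking square roots yields the claim.

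I expect the only delicate point to be the justification that $x\mapsto\psi(tx)$ is a nontrivial character for each $t\neq 0$, so that orthogonality genuinely kills the off-diagonal terms; this holds because $\psi$ is a fixed nontrivial additive character of $\mathbb{F}_{p^n}$ and every additive character arises as $x\mapsto\psi(tx)$. Everything else is routine, and notably the Winterhof-type $L^1$ bound (Theorem \ref{winter}) is not needed here, since completing the sum over the full field makes the relevant quantity computable exactly rather than merely bounded.
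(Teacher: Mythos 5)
Your proof is correct and complete. There is, however, nothing in the paper to compare it against: the paper states this lemma with no proof and, unlike the preceding Lemma \ref{lem3} (which cites Bourgain--Garaev and Vinogradov), with no citation either. Your argument --- Cauchy--Schwarz in $x$ to strip off the weights $\alpha_x$, completion of the $x$-sum to all of $\mathbb{F}_{p^n}$, then orthogonality of additive characters to kill the off-diagonal terms $y\neq y'$ --- is the standard route to this bound and every step checks out. The one point you are right to flag, and which deserves to be stated in the lemma itself, is that the inequality is simply false for the trivial character: taking $\alpha_x=\beta_y=1$ and $\psi=\psi_0$ makes the left side equal to $|G||H|$, which exceeds $(p^n|G||H|)^{1/2}$ as soon as $|G||H|>p^n$. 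So the lemma must be read as asserting the bound for nontrivial $\psi$ only, and your observation that $x\mapsto\psi(tx)$ is nontrivial for every $t\neq 0$ is exactly what makes the diagonal extraction legitimate. This reading is consistent with how the lemma is actually invoked in the proof of Lemma \ref{lem6}, where it is applied to $\psi(a\,\cdot)$ with $a\in\mathbb{F}_{p^n}^*$. You are also correct that the Winterhof bound (Theorem \ref{winter}) is not needed: completing the sum makes the mean square exactly computable, so no $L^1$ estimate on incomplete sums enters. In short, your write-up supplies a proof the paper omits, and it would be a worthwhile addition as stated.
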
 

\subsection{Exponential sums over points of elliptic curves}

\subsubsection{Elliptic curves}
Let $\mathcal{E}$ be an elliptic curve over $\mathbb{F}_p$, $p\geq 3$ defined by an affine Weieirstrass equation of the form \\
\begin{equation}
y^2=x^3+ax+b
\end{equation}
with coefficients $a,b\in\mathbb{F}_p$. It is known that the set $\mathcal{E}(\mathbb{F}_p)$ of $\mathbb{F}_p$-rational points of $\mathcal{E}$, with the point at infinity $\mathcal{O}$ as the neutral element, forms an abelian group. The group law operation is denoted by $\oplus$. Every point $\mathrm{P}\neq \mathcal{O} \in \mathcal{E}(\mathbb{F}_p)$ is denote by $\mathrm{P}=(x(\mathrm{P}),y(\mathrm{P}))$. Given an integer $n$ and a point $\mathrm{P}\in \mathcal{E}(\mathbb{F}_p)$, we write $n\mathrm{P}$ for the sum of $n$ copies of $\mathrm{P}$\\
$\displaystyle n\mathrm{P}=\mathrm{P}\oplus\mathrm{P}\oplus\hdots\oplus\mathrm{P}$, $n$ copies.\\

\subsubsection{Bilinear sums over additive character}

Given two subsets $\mathcal{P}, \mathcal{Q}$ of  $\mathcal{E}(\mathbb{F}_p)$, and arbitrary complex functions $\sigma, \mathrm{v}$ supported on $\mathcal{P}$ and  $\mathcal{Q}$ we concider the bilinear sums of additive characters.\\
\begin{center}$\displaystyle \mathrm{V}_{\sigma,\mathrm{v}}(\psi,\mathcal{P},\mathcal{Q})=\sum_{\mathrm{P}\in\mathcal{P}}\sum_{\mathrm{Q}\in\mathcal{Q}}\sigma(\mathrm{P})\mathrm{v}(\mathrm{Q})\psi(x(\mathrm{P}\oplus\mathrm{Q}))$\end{center}

\begin{lemma}{Let} $\mathcal{E}$ be an elliptic curve defined over $\mathbb{F}_q$ where $q=p^n$, with $n\geq 1$ and let \\
$\displaystyle \sum_{\mathrm{P}\in\mathcal{P}}|\sigma(\mathrm{P})|^2\leq \mathrm{R}$ and $\displaystyle \sum_{\mathrm{Q}\in\mathcal{Q}}|\mathrm{v}(\mathrm{Q})|^2\leq \mathrm{T}$\\
Then, uniformly over all nontrivial additive character $\psi$ of $\mathbb{F}_q$\\
\begin{center}$\displaystyle |\mathrm{V}_{\sigma,\mathrm{v}}(\psi,\mathcal{P},\mathcal{Q})|<<\sqrt{q\mathrm{R}\mathrm{T}}$\end{center}
\label{lem7}
\end{lemma}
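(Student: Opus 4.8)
The plan is to regard $\mathrm{V}_{\sigma,\mathrm{v}}(\psi,\mathcal{P},\mathcal{Q})$ as a weighted bilinear form and to remove the weight $\sigma$ by one application of Cauchy--Schwarz in the $\mathcal{P}$-variable, reducing the estimate to a complete additive character sum over $\mathcal{E}(\mathbb{F}_q)$ that the Weil bound controls. First I would write
$$|\mathrm{V}_{\sigma,\mathrm{v}}(\psi,\mathcal{P},\mathcal{Q})|^2\le\Big(\sum_{\mathrm{P}\in\mathcal{P}}|\sigma(\mathrm{P})|^2\Big)\sum_{\mathrm{P}\in\mathcal{P}}\Big|\sum_{\mathrm{Q}\in\mathcal{Q}}\mathrm{v}(\mathrm{Q})\psi(x(\mathrm{P}\oplus\mathrm{Q}))\Big|^2\le \mathrm{R}\,\Sigma,$$
where, the inner square being nonnegative, the outer summation range may be enlarged from $\mathcal{P}$ to all of $\mathcal{E}(\mathbb{F}_q)$; call the resulting quantity $\Sigma$.

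Next I would expand the square and interchange summation to split $\Sigma$ into a diagonal and an off-diagonal part,
$$\Sigma=\sum_{\mathrm{Q}_1,\mathrm{Q}_2\in\mathcal{Q}}\mathrm{v}(\mathrm{Q}_1)\overline{\mathrm{v}(\mathrm{Q}_2)}\sum_{\mathrm{P}\in\mathcal{E}(\mathbb{F}_q)}\psi\big(x(\mathrm{P}\oplus\mathrm{Q}_1)-x(\mathrm{P}\oplus\mathrm{Q}_2)\big).$$
The diagonal $\mathrm{Q}_1=\mathrm{Q}_2$ contributes $|\mathcal{E}(\mathbb{F}_q)|\sum_{\mathrm{Q}}|\mathrm{v}(\mathrm{Q})|^2\le \mathrm{T}\,|\mathcal{E}(\mathbb{F}_q)|$, of order $q\mathrm{T}$, which is the desired main term. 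For $\mathrm{Q}_1\neq\mathrm{Q}_2$ the substitution $\mathrm{W}=\mathrm{P}\oplus\mathrm{Q}_1$ turns the inner sum into $\sum_{\mathrm{W}\in\mathcal{E}(\mathbb{F}_q)}\psi\big(x(\mathrm{W})-x(\mathrm{W}\ominus\mathrm{D})\big)$ with $\mathrm{D}=\mathrm{Q}_1\ominus\mathrm{Q}_2\neq\mathcal{O}$, a complete additive character sum of a nonconstant rational function on $\mathcal{E}$; by the Weil--Bombieri estimate it is $O(\sqrt q)$ with an absolute implied constant, since the degree of that function is bounded independently of $q$.

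Combining, $\Sigma\le q\mathrm{T}+(\text{off-diagonal})$, and taking square roots would give $|\mathrm{V}_{\sigma,\mathrm{v}}|\ll\sqrt{q\mathrm{R}\mathrm{T}}$ once the off-diagonal is seen to be dominated by the diagonal. I expect this last point to be the main obstacle: bounding each inner sum by $\sqrt q$ and summing absolute values yields only $O(\sqrt q)\big(\sum_{\mathrm{Q}}|\mathrm{v}(\mathrm{Q})|\big)^2$, which loses a factor and does not beat $q\mathrm{T}$ unless $|\mathcal{Q}|$ is small, so genuine cancellation among the off-diagonal terms must be exploited rather than a term-by-term bound.

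To obtain the clean exponent $\tfrac12$ I would instead argue through the operator norm. The matrix $\mathrm{M}(\mathrm{P},\mathrm{Q})=\psi(x(\mathrm{P}\oplus\mathrm{Q}))$ is a group-convolution kernel on the abelian group $\mathcal{E}(\mathbb{F}_q)$, so it is diagonalized by the characters $\chi$ of that group and its singular values are exactly the hybrid sums $\big|\sum_{\mathrm{W}}\psi(x(\mathrm{W}))\overline{\chi(\mathrm{W})}\big|$. Each of these is $O(\sqrt q)$ by the known bounds for additive character sums over $\mathcal{E}(\mathbb{F}_q)$ twisted by a character of the point group, whence $\|\mathrm{M}\|\ll\sqrt q$ and $|\mathrm{V}_{\sigma,\mathrm{v}}|=|\langle\sigma,\mathrm{M}\mathrm{v}\rangle|\le\|\mathrm{M}\|\,\|\sigma\|_2\|\mathrm{v}\|_2\ll\sqrt{q\mathrm{R}\mathrm{T}}$. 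This Fourier-analytic diagonalization is what I expect to be decisive for reaching the stated bound with the correct power of $q$.
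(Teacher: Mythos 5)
The paper does not actually prove this lemma; it defers entirely to the citation [Ahmadi], so the real comparison is with the argument in that reference. Your first route (Cauchy--Schwarz in $\mathrm{P}$, completing the outer sum to $\mathcal{E}(\mathbb{F}_q)$, and estimating each off-diagonal term by a Weil--Bombieri bound) is indeed a dead end for exactly the reason you identify: term-by-term bounding of the off-diagonal gives $O(\sqrt{q})\,|\mathcal{Q}|\,\mathrm{T}$, which swamps the diagonal contribution $q\mathrm{T}$ as soon as $|\mathcal{Q}|\gg\sqrt{q}$, and that half of the write-up should simply be discarded. Your second route is the right one, and it is essentially the proof in the cited reference: extend $\sigma$ and $\mathrm{v}$ by zero to all of $\mathcal{E}(\mathbb{F}_q)$, note that the kernel $\psi(x(\mathrm{P}\oplus\mathrm{Q}))$ depends only on $\mathrm{P}\oplus\mathrm{Q}$ and hence is diagonalized by the characters $\chi$ of the (abelian, not necessarily cyclic) group $\mathcal{E}(\mathbb{F}_q)$, with eigenvalues the hybrid sums $\sum_{\mathrm{W}}\chi(\mathrm{W})\psi(x(\mathrm{W}))$, so that $|\mathrm{V}_{\sigma,\mathrm{v}}|\le\max_\chi\bigl|\sum_{\mathrm{W}}\chi(\mathrm{W})\psi(x(\mathrm{W}))\bigr|\cdot\|\sigma\|_2\,\|\mathrm{v}\|_2$. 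Two points must be made explicit for this to be complete. First, $x(\mathcal{O})$ is undefined, so the kernel has to be modified on the set $\{\mathrm{Q}=\ominus\mathrm{P}\}$; by Cauchy--Schwarz this changes $\mathrm{V}_{\sigma,\mathrm{v}}$ by at most $O(\sqrt{\mathrm{R}\mathrm{T}})$, which is harmless. Second, and more importantly, the uniform estimate $\max_\chi\bigl|\sum_{\mathrm{W}\neq\mathcal{O}}\chi(\mathrm{W})\psi(x(\mathrm{W}))\bigr|=O(\sqrt{q})$ is itself a nontrivial theorem (Bombieri--Weil for trivial $\chi$, Kohel--Shparlinski for the twisted case); it does not follow from the plain Weil bound for a rational function on $\mathcal{E}$ because of the group-character twist, so it must be quoted as an external input rather than asserted in passing. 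With that citation in place your operator-norm argument is sound and yields exactly $|\mathrm{V}_{\sigma,\mathrm{v}}|\ll\sqrt{q\mathrm{R}\mathrm{T}}$.
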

\begin{proof} See  \cite{Ahmadi} \end{proof}.\\

\section{Randomness extractor}
\subsection{Randomness extractor in finite fields}
 
We propose and prove the security of a simple deterministic randomness extractor for two subgroup $G_1$ and $G_2$ of $\mathbb{F}^*_q$ where $q=p^n$, with $p$ prime and $n\geq 1$. The main theorem of this section states that the $k$-least significant bits of a random element in $(G_1,G_2)$ are close to a truly random group-element in $\{0,1\}^k$. Our approach is from the model based on caracter sums.\\

\subsubsection{Randomness extraction in  $\mathbb{F}_p$ }

Let $\mathbb{F}_p$ be a finite prime field such that $|p|=m$.\\

Let $G_1$ and $G_2$ be two multiplicative subgroup of $\mathbb{F}^*_p$ of order $q_1$ (resp.$q_2$), with $|q_1|=l_1$, $|q_2|=l_2$.\\
Let $U_{G_1}$ (resp. $U_{G_2}$) be a  random variable uniformly distributed on $G_1$ (resp.$G_2$), and $k$ a positive integer less than $m$.\\

\begin{definition}{Extractor $f_k$ on $\mathbb{F}_{p}$}\\
The extractor $f_k$ is defined as a function 

\begin{center}$$ f_k: G_1\mathrm{x}G_2 \to \{0,1\}^k$$ $$\qquad (x_1,x_2)\longmapsto lsb_k(x_1x_2)$$\end{center}
\end{definition}

The following lemma shows that $f_k$ is a good randomness extractor.\\

\begin{lemma} {Let} $p$ be a $m$-bits prime, $G_1$ and $G_2$ be two multiplicative subgroups of $\mathbb{F}^*_p$ of order $q_1$ (resp.$q_2$), we denote $|q_1|=l_1$ and $|q_2|=l_2$.\\
Let $U_{G_1}$ (resp. $U_{G_2}$) be a  random variable uniformly distributed on $G_1$ (resp.$G_2$), and $k$ a positive integer less than $m$.\\
Let $U_k$ be a random variable uniformly distributed on $\{0,1\}^k$\\
If $\Delta = SD(f_k(U_{G_1},U_{G_2}),U_k)$ then\\
\begin{center}$\displaystyle 2\Delta \leq \sqrt{\frac{2^k}{p}} + \frac{2^{\frac{k}{2}}M(\log_2(p))^\frac{1}{2}}{q_1q_2}= 2^{\frac{k+m+\log_2(m)-(l_1+l_2)}{2}}$
\end{center}
\label{lem4}
\end{lemma}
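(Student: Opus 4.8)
The plan is to pass from the statistical distance $\Delta$ to a collision probability and then to bound that collision probability with the bilinear character sum of Lemma \ref{lem3}. First I would invoke Lemma \ref{lem1} with $\mathcal{X}=\{0,1\}^k$, which gives $Col(X)\geq (1+4\Delta^2)/2^k$ for $X=f_k(U_{G_1},U_{G_2})$, equivalently $4\Delta^2\leq 2^k\,Col(X)-1=2^k\big(Col(X)-2^{-k}\big)$. Thus it suffices to show that $Col(X)$ exceeds the ideal value $2^{-k}$ by a controlled amount and to conclude by taking a square root. Writing $P_s=\Pr[f_k(U_{G_1},U_{G_2})=s]=N_s/(q_1q_2)$, where $N_s=\#\{(x_1,x_2)\in G_1\times G_2:\ x_1x_2\bmod p\in I_s\}$ and $I_s=\{0\leq t<p:\ t\equiv s\ (\mathrm{mod}\ 2^k)\}$, we have $Col(X)=\sum_{s=0}^{2^k-1}P_s^2$.

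The next step is to detect membership in the arithmetic progression $I_s$ with the additive characters $e_p$ of $\mathbb{F}_p$. Using orthogonality,
$$N_s=\frac{1}{p}\sum_{a=0}^{p-1}\Big(\sum_{t\in I_s}e_p(-at)\Big)\,S(a,(G_1,G_2)),$$
where $S(a,(G_1,G_2))=\sum_{x_1\in G_1}\sum_{x_2\in G_2}e_p(ax_1x_2)$ is exactly the bilinear sum of Lemma \ref{lem3}. The term $a=0$ gives the main contribution $|I_s|\,q_1q_2/p$, so I would write $N_s=|I_s|q_1q_2/p+R_s$ with $R_s=\frac{1}{p}\sum_{a\neq 0}(\sum_{t\in I_s}e_p(-at))\,S(a,(G_1,G_2))$. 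By Lemma \ref{lem3}, $|S(a,(G_1,G_2))|\leq (p\,q_1q_2)^{1/2}$ uniformly in $a\neq 0$.

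Substituting into $Col(X)=\frac{1}{(q_1q_2)^2}\sum_s N_s^2$ and expanding the square splits $Col(X)$ into three pieces: the pure main term $\sum_s |I_s|^2/p^2$, a cross term, and $\frac{1}{(q_1q_2)^2}\sum_s R_s^2$. A direct count shows $\sum_s |I_s|^2/p^2-2^{-k}=r(2^k-r)/(2^k p^2)$, where $r=p\bmod 2^k$; this contributes the first summand, bounded above by $\sqrt{2^k/p}$ after the square root. For the remaining pieces I would factor out the uniform bound $(p\,q_1q_2)^{1/2}$ on $S(a,(G_1,G_2))$ and reduce everything to estimating the incomplete sums $\sum_{t\in I_s}e_p(-at)$; aggregating their absolute values over $a$ (or over $s$, after Cauchy--Schwarz) is a geometric sum over a progression of step $2^k$, and this is precisely where the factor $\log_2 p$ enters through the standard completing technique. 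Collecting the two contributions and applying $\sqrt{u+v}\leq \sqrt{u}+\sqrt{v}$ to $4\Delta^2\leq 2^k(Col(X)-2^{-k})$ yields the stated bound on $2\Delta$.

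The main obstacle is the incomplete character sum $\sum_{t\in I_s}e_p(-at)$: because $I_s$ is a progression inside $[0,p)$ rather than the full field, this sum is not annihilated by orthogonality and must be controlled term by term, producing the $\log_2 p$ loss and the constant $M$. Obtaining a clean bound for the aggregate $\sum_{a}|\sum_{t\in I_s}e_p(-at)|\,|S(a,(G_1,G_2))|$ --- rather than a crude triangle-inequality estimate that would square this loss --- and organising the $s$-sum so that the bilinear bound of Lemma \ref{lem3} is applied only once, is the delicate part of the argument.
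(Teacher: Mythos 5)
Your proposal follows the same skeleton as the paper's proof: Lemma \ref{lem1} to pass from $\Delta$ to the collision probability, additive characters $e_p$ to detect the collision event, the bilinear bound of Lemma \ref{lem3} applied uniformly in $a\neq 0$, and a P\'olya--Vinogradov-type completion of an incomplete geometric sum producing the $\log_2 p$ factor. The one real difference is organizational, and it matters. You compute the bucket counts $N_s$ individually, write $Col(X)=\sum_s N_s^2/(q_1q_2)^2$, and then must expand a square of character sums, which forces you to control a cross term and a term in $R_s^2$; as you yourself note, a naive triangle inequality there squares the logarithmic loss, and you leave the repair of this as ``the delicate part.'' The paper never meets this difficulty: it observes that $\sum_s N_s^2$ is exactly the number of pairs $((x_1,x_2),(x_1',x_2'))$ with $x_1x_2\equiv x_1'x_2'\pmod{2^k}$, writes this single congruence as $x_1x_2-x_1'x_2'=2^k u$ with $0\le u\le u_0=\mathrm{msb}_{m-k}(p-1)$, and detects it with one characteristic function $\frac1p\sum_a e_p(a(x_1x_2-x_1'x_2'-2^ku))$. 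The sums then factor as $|S(a,(G_1,G_2))|^2\sum_{u\le u_0}e_p(-a2^ku)$, so the incomplete sum appears only once and linearly, the bilinear bound is applied once, and the bound $\sum_{a\neq 0}\bigl|\sum_{u\le u_0}e_p(-a2^ku)\bigr|\le p\log_2 p$ finishes the estimate with a single $\log$. If you replace your per-bucket expansion by this direct collision-pair count (which is an identity, not an estimate), your argument closes and coincides with the paper's; as written, the step you flag as delicate is a genuine unresolved gap in your route.
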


\vspace{0.5cm}
\begin{proof}

Since $f_k(x_1,x_2)=\mathrm{lsb}_k(x_1x_2)$, this means $x_1x_2=2^ka+b$ or $x'_1x'_2=2^ka'+b'$ where $0\leq a,a'\leq 2^{m-k}$ et $0\leq b,b'\leq 2^k-1$\\

Thus $x_1x_2-x'_1x'_2=2^k(a-a')+(b-b')$ . 
If $\mathrm{lsb}_k(x_1x_2)$ and $\mathrm{lsb}_k(x'_1x'_2)$ coincide then \\$x_1x_2-x'_1x'_2=2^k(a-a')$.\\
Let $u=a-a'$ thus $0\leq u\leq2^{m-k}$\\

Let us define $K=2^k$,    \qquad    $ u_0=\mathrm{msb}_{m-k}(p-1)$,\\%\qquad $u\leq u_0$ since \\
if $w=2^mw_m+\hdots +2^1w_1+2^0w_0$ , $z=2^{m'}z_{m'}+\hdots+2^1z_1+2^0z_0$,  and $z<w$ then \\$\mathrm{msb}_k(z)<\mathrm{msb}_k(w)$\\
Since $0\leq a,a'\leq p-1$ therefore $u\leq u_0$\\
We introduce the following notation,
$$S(a,(G_1,G_2))=\sum_{x_1\in G_1}\sum_{x_2\in G_2}e_p(ax_1x_2)$$

We construct the caracteristic function,
$\displaystyle \textbf{1}((x_1,x_2),(x'_1,x'_2),u)=\frac{1}{p}\sum_{a=0}^{p-1}e_p(a(x_1x_2-x'_1x'_2-Ku))$, by  properties $(1)$ and $(2)$ of Lemma \ref{lem2}.

which is equal to $1$ if $x_1x_2-x'_1x'_2= Ku \mod(p)$ and $0$ otherwise. Therefore, we can evaluate $Col(f_k(U_{G_1},U_{G_2}))$ where $U_{G_1}$ (resp. $U_{G_2}$) is uniformly distributed in $G_1$ (resp. in $G_2$):\\

$\displaystyle Col(f_k(U_{G_1},U_{G_2}))$\\
$\displaystyle =\frac{1}{(q_1q_2)^2}|\{((x_1,x_2),(x'_1,x'_2))\in (G_1,G_2)^2 \exists u\leq u_0, x_1x_2-x'_1x'_2=Ku \mod(p)\}|$\\
$\displaystyle= \frac{1}{(q_1q_2)^2p}\sum_{(x_1,x_2)\in (G_1,G_2)}\sum_{(x'_1,x'_2)\in (G_1,G_2)}\sum_{u=0}^{u_0}\sum_{a=0}^{p-1}e_p(a(x_1x_2-x'_1x'_2-Ku))$\\
Then we manipulate the sums, separate some terms (a = 0) and obtain:\\
For $a=0$,\\
$\displaystyle Col(f_k(U_{G_1},U_{G_2}))=\frac{1}{(q_1q_2)^2p}\sum_{a=0}^{p-1}\sum_{(x_1,x_2)\in (G_1,G_2)}\sum_{(x'_1,x'_2)\in (G_1,G_2)}\sum_{u=0}^{u_0}e_p(0)=\frac{u_0+1}{p}$\quad (*)\\ 
For $a\in \mathbb{F}_p^*$,\\
$\displaystyle Col(f_k(U_{G_1},U_{G_2})) = \frac{1}{(q_1q_2)^2p}\sum_{a=1}^{p-1}\sum_{(x_1,x_2)\in (G_1,G_2)}\sum_{(x'_1,x'_2)\in (G_1,G_2)}\sum_{u=0}^{u_0}e_p(a(x_1x_2-x'_1x'_2-Ku))$\\%In equation (*) we inject the absolute value, introduce $M = max_a(|S(a,(G_1,G_2))|)$, make classical manipulations and obtain: 
$\displaystyle= \frac{1}{(q_1q_2)^2p}\sum_{a=1}^{p-1}\sum_{(x_1,x_2)\in (G_1,G_2)}e_p(ax_1x_2)\sum_{(x'_1x'_2)\in (G_1,G_2)}e_p(-ax'_1x'_2)\sum_{u=0}^{u_0}e_p(-aKu)$\\
$\displaystyle= \frac{1}{(q_1q_2)^2p}\sum_{a=1}^{p-1}S(a,(G_1,G_2))S(-a,(G_1,G_2))\sum_{u=0}^{u_0}e_p(-aKu)$\\
$\displaystyle= \frac{1}{(q_1q_2)^2p}\sum_{a=1}^{p-1}|S(a,(G_1,G_2))|^2\sum_{u=0}^{u_0}e_p(-aKu)$\\
%$\displaystyle= \frac{1}{(q_1q_2)^2p}\sum_{a=1}^{p-1}|S(a,(G_1,G_2))|^2\sum_{u=0}^{u_0}e_p(-au)$, by property (3) of lemma\ref{lem2}\\
We inject the result of (*) then,\\
$\displaystyle Col(f_k(U_{G_1},U_{G_2}))= \frac{u_0+1}{p}+\frac{1}{(q_1q_2)^2p}\sum_{a=1}^{p-1}|S(a,(G_1,G_2))|^2\sum_{u=0}^{u_0}e_p(-aKu)$\\

We have\\
$\displaystyle \sum_{a=1}^{p-1}\sum_{u=0}^{u_0}e_p(-aKu)$\\
$\displaystyle=\sum_{a=1}^{p-1}\sum_{u=0}^{u_0}e_p(-au)$, it comes from a change of variable ($a'=Ka=2^ka \mod(p)$, with $\gcd(2,p)=1$).\\
$\displaystyle=\sum_{a=1}^{p-1}\frac{1-e_p(-a(u_0+1))}{1-e_p(-a)} $, considere the fact that $[0,u_0]$ is an interval, the sum is the geometric sum.\\
$\displaystyle=\sum_{a=1}^{p-1}\frac{\sin(\frac{\pi a(u_o+1)}{p})}{\sin(\frac{\pi a}{p})}$
$\displaystyle=2\sum_{a=1}^{\frac{p-1}{2}}\frac{\sin(\frac{\pi a(u_o+1)}{p})}{\sin(\frac{\pi a}{p})}$\\
$\displaystyle\leq 2\sum_{a=1}^{\frac{p-1}{2}}\frac{1}{\sin(\frac{\pi a}{p})}$
$\displaystyle\leq 2\sum_{a=1}^{\frac{p-1}{2}}|\frac{p}{a}|$
$\displaystyle\leq p\log_2(p)$\\

Therefore\\

 $\displaystyle Col(f_k(U_{G_1},U_{G_2})) \leq \frac{u_0+1}{p} +\frac{1}{(q_1q_2)^2p}|S(a,(G_1,G_2))|^2p\log_2(p)$\\
 $\displaystyle  \leq \frac{u_0+1}{p} + \frac{1}{(q_1q_2)^2p}(pq_1q_2p\log_2(p))$  , by Lemma \ref{lem3}\\
$\displaystyle\leq \frac{1}{p}+\frac{p\log_2(p)}{q_1q_2} $\\

We now use the Lemma \ref{lem1} which gives a relation between the statistical distance $\Delta$ of $f_k(U_{G_1},U_{G_2})$ with the uniform distribution and the collision probability: \\ $Col(f_k(U_{G_1},U_{G_2})) = \frac{1+4\Delta^2}{2^k}$ . The previous upper bound, combined with some manipulations, gives: \\

$\displaystyle 2\Delta \leq \sqrt{2^k.Col(f_k(U_{G_1},U_{G_2}))-1}\leq \sqrt{\frac{2^k}{p}} + \sqrt{\frac{2^kp(\log_2(p))}{q_1q_2}}\leq 2^{\frac{k+m+\log_2(m)-(l_1+l_2)}{2}}$
\end{proof}
%What finished the proof.\\

\subsubsection{Randomness extraction in  $\mathbb{F}_{p^n}$ }

Consider the finite field $\mathbb{F}_{p^n}$, where $p$ is prime and $n$ is a positive integer greather than $1$.\\ 
$\mathbb{F}_{p^n}$ is a $n$-dimensional vector space over $\mathbb{F}_p$. Let $\{\alpha_1,\alpha_2,\hdots,\alpha_n\}$ be a basis of $\mathbb{F}_{p^n}$ over $\mathbb{F}_p$. That means, every element $x$ and $y$ in $\mathbb{F}_{p^n}$ can be represented in the form\\ $x = x_1\alpha_1 + x_2\alpha_2+ \ldots + x_n\alpha_n$, et $x' = x'_1\alpha_1 + x'_2\alpha_2+ \ldots + x'_n\alpha_n$.  where $x_i$ (resp. $x'_i$) $\in \mathbb{F}_{p^n}$. \\
Let $G_1$ and $G_2$ be two multiplicative subgroups of $\mathbb{F}^*_{p^n}$ of order $q_1$ (resp.$q_2$), we denote $|q_1|=l_1$, $|q_2|=l_2$.\\
Let $U_{G_1}$ (resp. $U_{G_2}$) be a  random variable uniformly distributed on $G_1$ (resp.$G_2$), and $k$ a positive integer less than $n$.\\

\begin{definition}{Extractor $F_k$ on $\mathbb{F}_{p^n}$}\\
The extractor $F_k$ is defined as a function 

$$F_k: G_1\mathrm{x}G_2 \to \{0,1\}^k$$ $$\qquad (x,x')\longmapsto (x_1x'_1,x_2x'_2,\hdots,x_kx'_k)$$
\end{definition}
The following lemma shows that $F_k$ is a good randomness extractor.\\

\begin{lemma} {Let} $p$ be a $m$-bits prime.
Let $G_1$ and $G_2$ be two multiplicative subgroups of $\mathbb{F}^*_{p^n}$ of order $q_1$ (resp.$q_2$), we denote $|q_1|=l_1$, $|q_2|=l_2$.\\
Let $U_{G_1}$ (resp. $U_{G_2}$) be a  random variable uniformly distributed on $G_1$ (resp.$G_2$), and $k$ a positive integer less than $m$.
Let $U_k$ be a random variable uniformly distributed on $\{0,1\}^k$\\
If $\Delta = SD(F_k(U_{G_1},U_{G_2}),U_k)$ then
\begin{center}$\displaystyle \Delta \leq \sqrt{\frac{p^{n+k-2}}{q_1q_2}}= 2^{\frac{km+nm-(l_1+l_2+2)}{2}}$\end{center}
\label{lem6}
\end{lemma}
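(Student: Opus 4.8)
The plan is to follow the pattern of the proof of Lemma~\ref{lem4}, but now over $\mathbb{F}_{p^n}$ and with additive characters. I would first reduce everything to a bound on the collision probability $Col(F_k(U_{G_1},U_{G_2}))$, and then invoke Lemma~\ref{lem1} with $|\mathcal{Y}|=p^k$ (the output being $k$ coordinates, each in $\mathbb{F}_p$) to convert the collision bound into the claimed bound on $\Delta$. Concretely, Lemma~\ref{lem1} gives $4\Delta^2 \leq p^k\,Col(F_k(U_{G_1},U_{G_2})) - 1$, so it suffices to prove $Col(F_k(U_{G_1},U_{G_2})) \leq \frac{1}{p^k} + \frac{p^n}{q_1 q_2}$.

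To compute the collision probability I would build the relevant characteristic function from the additive characters of $\mathbb{F}_{p^n}$. Writing field elements in the basis $\{\alpha_1,\dots,\alpha_n\}$, the extractor retains the first $k$ coordinates of the product $X_1X_2$, so a collision between two independent draws $(X_1,X_2)$ and $(X_1',X_2')$ in $G_1 \times G_2$ means that the first $k$ coordinates of $X_1X_2 - X_1'X_2'$ vanish. Each $\mathbb{F}_p$-linear form $z \mapsto \sum_{i=1}^{k} a_i z_i$ equals $z \mapsto \mathrm{Tr}(\beta z)$ for a unique $\beta \in \mathbb{F}_{p^n}$ by non-degeneracy of the trace form, and as $(a_1,\dots,a_k)$ runs over $\mathbb{F}_p^k$ the corresponding $\beta$ runs over a $k$-dimensional $\mathbb{F}_p$-subspace $V \subseteq \mathbb{F}_{p^n}$ with $|V| = p^k$. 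Hence the indicator of the collision event is $\frac{1}{p^k}\sum_{\beta \in V}\psi_\beta(X_1X_2 - X_1'X_2')$, where $\psi_\beta(w) = e_p(\mathrm{Tr}(\beta w))$, which leads to
$$Col(F_k(U_{G_1},U_{G_2})) = \frac{1}{(q_1q_2)^2 p^k}\sum_{\beta \in V}\left|\sum_{x \in G_1}\sum_{y \in G_2}\psi_\beta(xy)\right|^2.$$

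Next I would separate the term $\beta = 0$, which contributes $\frac{1}{(q_1q_2)^2 p^k}(q_1q_2)^2 = \frac{1}{p^k}$, the ideal collision probability. For each $\beta \neq 0$ in $V$, writing $\mathrm{Tr}(\beta x y) = \mathrm{Tr}((\beta x) y)$ and applying Lemma~\ref{lem5} to the sets $\beta G_1$ and $G_2$ (of sizes $q_1$ and $q_2$) with the fixed nontrivial character $\psi(w)=e_p(\mathrm{Tr}(w))$ yields $\left|\sum_{x \in G_1}\sum_{y \in G_2}\psi_\beta(xy)\right|^2 \leq p^n q_1 q_2$. Summing over the fewer than $p^k$ nonzero $\beta$ bounds the remainder by $\frac{p^k \cdot p^n q_1 q_2}{(q_1q_2)^2 p^k} = \frac{p^n}{q_1 q_2}$, giving $Col(F_k(U_{G_1},U_{G_2})) \leq \frac{1}{p^k} + \frac{p^n}{q_1 q_2}$. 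Feeding this into Lemma~\ref{lem1} gives $4\Delta^2 \leq \frac{p^{n+k}}{q_1 q_2}$, and substituting $p \approx 2^m$, $q_i \approx 2^{l_i}$ recovers the stated bit-length.

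I expect the main obstacle to be the second step: the trace-duality identification of the ``first $k$ coordinates'' functionals with the additive characters $\psi_\beta$ for $\beta$ ranging over a $k$-dimensional subspace $V$, which is precisely what turns the coordinate-collision count into a sum of genuine bilinear character sums. Once that is in place, absorbing $\beta$ into $G_1$ to invoke Lemma~\ref{lem5} and the final counting are routine. A secondary point to settle is the output convention: the bound contains $p^k$ rather than $2^k$, so the codomain must be read as $\mathbb{F}_p^{k}$ (the first $k$ field-coordinates of the product) rather than literally $\{0,1\}^k$, and this must be stated explicitly for Lemma~\ref{lem1} to apply with $|\mathcal{Y}| = p^k$; the precise constant in the stated $p^{n+k-2}$ should also be reconciled against the $\frac{1}{4}p^{n+k}$ produced above.
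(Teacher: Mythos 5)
Your proposal is correct and matches the paper's overall strategy: bound the collision probability of $F_k$ via a characteristic function built from additive characters, isolate the trivial character (which contributes the ideal $1/p^k$), control the nontrivial bilinear sums with Lemma~\ref{lem5}, and convert to a statistical-distance bound via Lemma~\ref{lem1}. The one place you diverge is the treatment of the incomplete part of the sum. The paper sums the characteristic function over all $a\in\mathbb{F}_{p^n}$ and over the complementary subgroup $R$ spanned by $\alpha_{k+1},\dots,\alpha_n$, and then controls $\sum_{a}\left|\sum_{r\in R}\psi(-ar)\right|$ with the Winterhof bound (Theorem~\ref{winter}); you instead observe by trace duality that $\frac{1}{|R|}\sum_{r\in R}\psi(-ar)$ is exactly the indicator that $a$ lies in the $p^k$-element annihilator $V$ of $R$, so the character sum collapses to $\beta\in V$ and you simply count the at most $p^k-1$ nontrivial terms. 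The two computations yield the identical bound $Col(F_k)\le \frac{1}{p^k}+\frac{p^n}{q_1q_2}$; your route is slightly cleaner in that it dispenses with Theorem~\ref{winter} altogether, and it has the side benefit of making explicit that Lemma~\ref{lem1} must be applied with output size $p^k$ (not $2^k$, as the paper's codomain $\{0,1\}^k$ and its line $Col=\frac{1+4\Delta^2}{2^k}$ would suggest) for the stated bound $\Delta\le\sqrt{p^{n+k-2}/(q_1q_2)}$ to come out --- a point the paper glosses over.
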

 
 \vspace{0.5cm}
\begin{proof}

Let $(x,x'),(y,z)\in ( G_1,G_2)^2$\\
Let us introduce the notation\\
$\displaystyle T(a,(G_1,G_2))=\sum_{x\in G_1}\sum_{x'\in G_2}\psi(axx')$\\
Let us define the following sets\\

$R=\displaystyle\{x_{k+1}x'_{k+1}\alpha_{k+1} + x_{k+2}x'_{k+2}\alpha_{k+2}\hdots+ x_nx'_n\alpha_n\}$ , a subgroup of $\mathbb{F}_{p^n}$\\

$C=\displaystyle\{((x,x'),(y,z))\in (G_1,G_2)^2/\exists r\in R, xx'-yz=r\}$\\

$|C|=\displaystyle\frac{1}{p^n} \sum_{{x\in G_1},{x'\in G_2}}\sum_{{y\in G_1},{z\in G_2}}\sum_{r\in R}\sum_{a\in\mathbb{F}_{p^n} }\psi(a(xx'-yz-r))$\\

we can evaluate the collision probability:\\

$Col(F_k(U_{G_1},U_{G_2}))=\frac{|C|}{|G_1\mathrm{x}G_2|^2}$

\begin{center}$\displaystyle = \frac{1}{(q_1q_2)^2p^n} \sum_{(x,x')\in (G_1,G_2)}\sum_{(y,z)\in (G_1,G_2)}\sum_{r\in R}\sum_{a\in\mathbb{F}_{p^n} }\psi(a(xx'-yz-r))$\end{center}
\begin{center}$\displaystyle = \frac{1}{(q_1q_2)^2p^n} \sum_{a\in\mathbb{F}_{p^n} }\sum_{(x,x')\in (G_1,G_2)}\psi(axx')\sum_{(y,z)\in (G_1,G_2)}\psi(-ayz)\sum_{r\in R}\psi(-ar)$\end{center}
Then we manipulate the sums, separate some terms (a = 0) and obtain:\\
For $a=0$\\
$\displaystyle Col(F_k(U_{G_1},U_{G_2}))=\frac{1}{(q_1q_2)^2p^n} \sum_{a\in\mathbb{F}_{p^n}}\sum_{(x,x')\in (G_1,G_2)}\sum_{(y,z)\in (G_1,G_2)}\sum_{r\in R}\psi(0)=\frac{1}{p^k}$\\
For $a\in\mathbb{F}_{p^n}^*$\\
$\displaystyle Col(F_k(U_{G_1},U_{G_2}))=\frac{1}{(q_1q_2)^2p^n} \sum_{a\in\mathbb{F}_{p^n}^* }\sum_{(x,x')\in (G_1,G_2)}\psi(axx')\sum_{(y,z)\in (G_1,G_2)}\psi(-ayz)\sum_{r\in R}\psi(-ar)$\\
Then for all $a\in\mathbb{F}_{p^n}$\\
$\displaystyle Col(F_k(U_{G_1},U_{G_2}))=\frac{1}{p^k}+\frac{1}{(q_1q_2)^2p^n} \sum_{a\in\mathbb{F}_{p^n}^* }\sum_{(x,x')\in (G_1,G_2)}\psi(axx')\sum_{(y,z)\in (G_1,G_2)}\psi(-ayz)\sum_{r\in R}\psi(-ar)$\\
$\displaystyle Col(F_k(U_{G_1},U_{G_2}))= \frac{1}{p^k}+\frac{1}{(q_1q_2)^2p^n}\sum_{a\in\mathbb{F}^*_{p^n}}|T(a,(G_1,G_2))|^2\sum_{r\in R}\psi(-ar)$\\
$\displaystyle Col(F_k(U_{G_1},U_{G_2}))\leq \frac{1}{p^k}+\frac{p^n(q_1q_2)p^n}{(q_1q_2)^2p^n}$ , by Lemma \ref{lem5} and Theorem \ref{winter}\\
$\displaystyle Col(F_k(U_{G_1},U_{G_2}))\leq \frac{1}{p^k}+\frac{p^n}{(q_1q_2)}$\\

We now use the Lemma \ref{lem1} which gives a relation between the statistical distance $\Delta$ of $F_k(U_{G_1},U_{G_2})$ with the uniform distribution $U_k$ and the collision probability:\\ 

$Col(F_k(U_{G_1},U_{G_2})) = \frac{1+4\Delta^2}{2^k}$ . \\
$\displaystyle 2\Delta \leq \sqrt{2^k.Col(F_k(U_{G_1},U_{G_2}))-1}$\\
%The previous upper bound, combined with some manipulations, gives: \\
$\displaystyle \Delta\leq \sqrt{\frac{p^{n+k}}{4q_1q_2}}\leq \sqrt{\frac{p^{n+k}}{2^2q_1q_2}}$\\

$\displaystyle \Delta\leq \sqrt{\frac{p^{n+k-2}}{q_1q_2}}$\\
Therefore with some manipulations, we obtain the expected result:\\
$\displaystyle \Delta\leq \sqrt{\frac{p^{n+k-2}}{q_1q_2}}=2^{\frac{km+nm-(l_1+l_2+2)}{2}}$
\end{proof}

\subsection{Randomness extraction in elliptic curves}

\subsubsection{Randomness extractor in $\mathcal{E}(\mathbb{F}_p)$}
\begin{definition}
Let $p$ be a prime greater than 5. Let $\mathcal{E}$ be an elliptic curve over the finite field $\mathbb{F}_p$ and let $\mathcal{P},\mathcal{Q}$ be two subgroups of $\mathcal{E}(\mathbb{F}_p)$. Let denote $|\mathcal{P}|=q_1$ and $|\mathcal{Q}|=q_2$.\\
Then is define the function

\begin{center}$$ extrac_k: \mathcal{P}\mathrm{x}\mathcal{Q} \to \{0,1\}^k$$ $$\qquad\qquad\qquad\qquad (\mathrm{P},\mathrm{Q})\longmapsto lsb_k(x(\mathrm{P}).x(\mathrm{Q}))$$\end{center}
\end{definition}

\begin{lemma}{We now show an equivalent of Lemma \ref{lem4}}\\
\label{lemEC}
Let $\mathcal{E}$ be an elliptic curve over the finite field $\mathbb{F}_p$ and let $\mathcal{P},\mathcal{Q}$ be two subgroups of $\mathcal{E}(\mathbb{F}_p)$. Let denote $|\mathcal{P}|=q_1$ and $|\mathcal{Q}|=q_2$. Let $U_\mathcal{P}$ and $U_\mathcal{Q}$ be two random variables uniformly distributed in $\mathcal{P}$ and $\mathcal{Q}$ respectively. Let $U_k$ be the uniform distribution in $\{0,1\}^k$. Then\\
\begin{center}$\displaystyle \Delta(extrac_k(U_\mathcal{P},U_\mathcal{Q}),U_k)<<\sqrt{\frac{2^{k-2}p\log_2(p)}{q_1q_2}}=2^{\frac{k+n+\log_2(n)-(l_1+l_2+2)}{2}}$\end{center}
\end{lemma}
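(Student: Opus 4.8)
The plan is to transcribe the argument of Lemma \ref{lem4} almost line for line, replacing the field product $x_1x_2$ by the combined coordinate of the two source points and replacing the bilinear bound of Lemma \ref{lem3} by the elliptic-curve bilinear bound of Lemma \ref{lem7}. Write $K=2^k$ and $u_0=\mathrm{msb}_{n-k}(p-1)$, where $n$ denotes the bit-length of $p$. First I would translate the collision event $\mathrm{lsb}_k=\mathrm{lsb}_k$ into a congruence: two inputs collide exactly when the difference of their combined values equals $Ku \pmod p$ for some $0\le u\le u_0$. I then detect this congruence with the characteristic function $\frac{1}{p}\sum_{a=0}^{p-1}e_p(a(\,\cdot\,))$ built from properties (1)--(2) of Lemma \ref{lem2}, exactly as before.

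Second, I would expand $Col(extrac_k(U_\mathcal{P},U_\mathcal{Q}))$ as a quadruple sum over $(\mathrm{P},\mathrm{Q}),(\mathrm{P}',\mathrm{Q}')$, $u$ and $a$, and split off $a=0$. The $a=0$ contribution collapses to $\frac{u_0+1}{p}$, which is the main term $\approx 2^{-k}$ (the collision probability of $U_k$), while the remaining $a\neq 0$ terms factor as
$$\frac{1}{(q_1q_2)^2p}\sum_{a=1}^{p-1}|V(a)|^2\sum_{u=0}^{u_0}e_p(-aKu),$$
where $V(a)=\sum_{\mathrm{P}\in\mathcal{P}}\sum_{\mathrm{Q}\in\mathcal{Q}}e_p\bigl(a\,c(\mathrm{P},\mathrm{Q})\bigr)$ is the inner bilinear character sum attached to the combined coordinate $c(\mathrm{P},\mathrm{Q})$. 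The geometric sum over $u$ is bounded by $p\log_2 p$ via the identical trigonometric estimate used in Lemma \ref{lem4}, the change of variable $a'=Ka$ being legitimate since $\gcd(2,p)=1$.

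Third, and this is the only genuinely new input, I would bound $|V(a)|$ by invoking Lemma \ref{lem7} with the trivial weights $\sigma\equiv 1$ on $\mathcal{P}$ and $\mathrm{v}\equiv 1$ on $\mathcal{Q}$, so that $R=|\mathcal{P}|=q_1$ and $T=|\mathcal{Q}|=q_2$, giving $|V(a)|<<\sqrt{p\,q_1q_2}$ uniformly over the nontrivial additive characters $x\mapsto e_p(ax)$, $a\neq 0$. Substituting $|V(a)|^2<<p\,q_1q_2$ together with the geometric-sum bound yields $Col(extrac_k(U_\mathcal{P},U_\mathcal{Q}))<<2^{-k}+\frac{p\log_2 p}{q_1q_2}$. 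Finally Lemma \ref{lem1}, in the direction $4\Delta^2\le 2^kCol-1$, turns this into $4\Delta^2<<\frac{2^k p\log_2 p}{q_1q_2}$, i.e. $\Delta<<\sqrt{2^{k-2}p\log_2 p/(q_1q_2)}$; rewriting $p\approx 2^n$, $\log_2 p\approx n$, $q_i\approx 2^{l_i}$ recovers the stated exponent $\frac{k+n+\log_2 n-(l_1+l_2+2)}{2}$.

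The hard part is ensuring that $V(a)$ is genuinely of the shape controlled by Lemma \ref{lem7}. That lemma bounds sums of $\psi(x(\mathrm{P}\oplus\mathrm{Q}))$, the $x$-coordinate of the \emph{group sum}, whereas the extractor is written with $x(\mathrm{P})\cdot x(\mathrm{Q})$; to apply Lemma \ref{lem7} directly one must read $c(\mathrm{P},\mathrm{Q})=x(\mathrm{P}\oplus\mathrm{Q})$. If one instead keeps the literal product of $x$-coordinates, then $V(a)=\sum_{\mathrm{P},\mathrm{Q}}e_p(a\,x(\mathrm{P})x(\mathrm{Q}))$ has to be rewritten as a weighted sum $\sum_{s,t}r_\mathcal{P}(s)r_\mathcal{Q}(t)e_p(ast)$ over the value sets $x(\mathcal{P}),x(\mathcal{Q})$ with multiplicities $r_\mathcal{P},r_\mathcal{Q}\le 2$ (from the $2$-to-$1$ nature of $\mathrm{P}\mapsto x(\mathrm{P})$) and bounded through Lemma \ref{lem3}; verifying that these bounded multiplicities affect only the implied constant is the delicate point to check.
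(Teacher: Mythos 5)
Your proposal follows essentially the same route as the paper's proof: the same characteristic function $\frac{1}{p}\sum_a e_p(a(\cdot))$ detecting the congruence modulo $Ku$, the same separation of the trivial character giving the main term $\frac{u_0+1}{p}$, the same geometric-sum bound $p\log_2 p$, the same invocation of Lemma \ref{lem7} for the bilinear sum, and the same final passage through Lemma \ref{lem1}. The ``delicate point'' you flag in your last paragraph is real and is in fact glossed over in the paper: the paper applies Lemma \ref{lem7} directly to $\sum_{\mathrm{P}}\sum_{\mathrm{Q}}\psi(x(\mathrm{P})x(\mathrm{Q}))$, even though that lemma controls sums of $\psi(x(\mathrm{P}\oplus\mathrm{Q}))$, so either your reinterpretation of the extractor as $x(\mathrm{P}\oplus\mathrm{Q})$ or your reduction to a weighted bilinear sum over the value sets $x(\mathcal{P}),x(\mathcal{Q})$ with multiplicities at most $2$ (handled by Lemma \ref{lem3} at the cost of a bounded constant) is needed to make the argument rigorous; your version is in this respect more careful than the paper's.
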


\begin{proof}
Let us define $K=2^k$,    \qquad    $ u_0=\mathrm{msb}_{m-k}(p-1)$\\
Define the characteristic function\\
\textbf{1}$\displaystyle ((\mathrm{P},\mathrm{Q}),(\mathrm{A},\mathrm{B}),u)=\frac{1}{p}\sum_{\psi\in\Psi}\psi(x(\mathrm{P})x(\mathrm{Q})-x(\mathrm{A})x(\mathrm{B})-Ku)$ which is equal to 1 if $\psi=\psi_0$ and to $0$, otherwise.\\

Let us compute the collision probablity\\
$\displaystyle Col(extrac_k(U_\mathcal{P},U_\mathcal{Q}))=\frac{1}{(q_1q_2)^2p}\sum_{\mathrm{P}\in\mathcal{P}}\sum_{\mathrm{Q}\in\mathcal{Q}}\sum_{\mathrm{A}\in\mathcal{P}}\sum_{\mathrm{B}\in\mathcal{Q}}\sum_{\psi\in\Psi}\sum_{u\leq u_0}\psi(x(\mathrm{P})x(\mathrm{Q})-x(\mathrm{A})x(\mathrm{B})-Ku)$\\
Then we manipulate the sums, separate some terms $(\psi = \psi_0)$ and obtain:\\
$\displaystyle Col(extrac_k(U_\mathcal{P},U_\mathcal{Q}))=\frac{1}{(q_1q_2)^2p}\sum_{\psi\in\Psi}\sum_{\mathrm{P}\in\mathcal{P}}\sum_{\mathrm{Q}\in\mathcal{Q}}\sum_{\mathrm{A}\in\mathcal{P}}\sum_{\mathrm{B}\in\mathcal{Q}}\sum_{u\leq u_0}\psi(x(\mathrm{P})x(\mathrm{Q})-x(\mathrm{A})x(\mathrm{B})-Ku)$\\
For $(\psi = \psi_0)$,\\
$\displaystyle Col(extrac_k(U_\mathcal{P},U_\mathcal{Q}))=\frac{1}{(q_1q_2)^2p}\sum_{\psi=\psi_0}\sum_{\mathrm{P}\in\mathcal{P}}\sum_{\mathrm{Q}\in\mathcal{Q}}\sum_{\mathrm{A}\in\mathcal{P}}\sum_{\mathrm{B}\in\mathcal{Q}}\sum_{u\leq u_0}\psi_0(0)$\\
$\displaystyle =\frac{1}{(q_1q_2)^2p}\sum_{\psi=\psi_0}\sum_{\mathrm{P}\in\mathcal{P}}\sum_{\mathrm{Q}\in\mathcal{Q}}\sum_{\mathrm{A}\in\mathcal{P}}\sum_{\mathrm{B}\in\mathcal{Q}}\sum_{u\leq u_0}e_p(Tr(0))$\\
$\displaystyle =\frac{1}{(q_1q_2)^2p}\sum_{\psi=\psi_0}\sum_{\mathrm{P}\in\mathcal{P}}\sum_{\mathrm{Q}\in\mathcal{Q}}\sum_{\mathrm{A}\in\mathcal{P}}\sum_{\mathrm{B}\in\mathcal{Q}}\sum_{u\leq u_0}1$\\
$\displaystyle =\frac{u_0+1}{p}$\\

For $(\psi \neq \psi_0)$,\\
$\displaystyle Col(extrac_k(U_\mathcal{P},U_\mathcal{Q}))=\frac{1}{(q_1q_2)^2p}\sum_{\psi\neq\psi_0}\sum_{\mathrm{P}\in\mathcal{P}}\sum_{\mathrm{Q}\in\mathcal{Q}}\sum_{\mathrm{A}\in\mathcal{P}}\sum_{\mathrm{B}\in\mathcal{Q}}\sum_{u\leq u_0}\psi(x(\mathrm{P})x(\mathrm{Q})-x(\mathrm{A})x(\mathrm{B})-Ku)$\\
Then \\
$\displaystyle Col(extrac_k(U_\mathcal{P},U_\mathcal{Q}))=\frac{u_0+1}{p}+\frac{1}{(q_1q_2)^2p}\sum_{\psi\neq\psi_0}\sum_{\mathrm{P}\in\mathcal{P}}\sum_{\mathrm{Q}\in\mathcal{Q}}\sum_{\mathrm{A}\in\mathcal{P}}\sum_{\mathrm{B}\in\mathcal{Q}}\sum_{u\leq u_0}\psi(x(\mathrm{P})x(\mathrm{Q})-x(\mathrm{A})x(\mathrm{B})-Ku)$\\
$\displaystyle =\frac{u_0+1}{p}+\frac{1}{(q_1q_2)^2p}\sum_{\psi\neq\psi_0}\sum_{\mathrm{P}\in\mathcal{P}}\sum_{\mathrm{Q}\in\mathcal{Q}}\psi(x(\mathrm{P})x(\mathrm{Q}))\sum_{\mathrm{A}\in\mathcal{P}}\sum_{\mathrm{B}\in\mathcal{Q}}\psi(-x(\mathrm{A})x(\mathrm{B}))\sum_{u\leq u_0}\psi(-Ku)$\\
$\displaystyle =\frac{u_0+1}{p}+\frac{1}{(q_1q_2)^2p}\sum_{\psi\neq\psi_0}|\sum_{\mathrm{P}\in\mathcal{P}}\sum_{\mathrm{Q}\in\mathcal{Q}}\psi(x(\mathrm{P})x(\mathrm{Q}))||\sum_{\mathrm{A}\in\mathcal{P}}\sum_{\mathrm{B}\in\mathcal{Q}}\psi(-x(\mathrm{A})x(\mathrm{B}))|\sum_{u\leq u_0}\psi(-Ku)$\\
$\displaystyle =\frac{u_0+1}{p}+\frac{1}{(q_1q_2)^2p}\sum_{\psi\neq\psi_0}|\mathrm{V}(\psi,\mathcal{P},\mathcal{Q})|^2\sum_{u\leq u_0}\psi(-Ku)$\\
$\displaystyle \leq\frac{1}{p}+\frac{1}{(q_1q_2)^2p}\sum_{\psi\neq\psi_0}q_1q_2p\sum_{u\leq u_0}\psi(-Ku)$, by Lemma \ref{lem7}\\
$\displaystyle \leq\frac{1}{p}+\frac{1}{(q_1q_2)^2p}pq_1q_2p\log_2(p)$,  since it is shown that $\displaystyle\sum_{\psi\neq\psi_0}\sum_{u\leq u_0}\psi(-Ku)\leq p\log_2(p)$\\
$\displaystyle \leq\frac{1}{p}+\frac{1}{(q_1q_2)}p\log_2(p)$\\
We now use the Lemma \ref{lem1}\\
$\displaystyle 2\Delta(extrac_k(U_\mathcal{P},U_\mathcal{Q}),U_k)<< \sqrt{2^k.Col(F_k(U_{G_1},U_{G_2}))-1}$\\
$\displaystyle 2\Delta(extrac_k(U_\mathcal{P},U_\mathcal{Q}),U_k)<< \sqrt{2^k(\frac{1}{p}+\frac{1}{(q_1q_2)}p\log_2(p)-1)}$\\
Therefore with some manipulations,\\
$\displaystyle \Delta(extrac_k(U_\mathcal{P},U_\mathcal{Q}),U_k)<<\sqrt{\frac{2^{k-2}p\log_2(p)}{q_1q_2}}=2^{\frac{k+n+\log_2(n)-(l_1+l_2+2)}{2}}$

\end{proof}

\subsubsection{Randomness extractor in $\mathcal{E}(\mathbb{F}_{p^n})$}
\begin{definition}
Let $p$ be a prime, $p>5$. Let $\mathcal{E}$ be an elliptic curve over the finite field $\mathbb{F}_{p^n}$. let $\mathcal{P},\mathcal{Q}$ be two subgroups of $\mathcal{E}(\mathbb{F}_{p^n})$. Let denote $|\mathcal{P}|=q_1$ and $|\mathcal{Q}|=q_2$.\\
Then is define the function

\begin{center}$$ Extrac_k: \mathcal{P}\mathrm{x}\mathcal{Q} \to \{0,1\}^k$$ $$\qquad\qquad\qquad\qquad (\mathrm{P},\mathrm{Q})\longmapsto lsb_k(x(\mathrm{P}).x(\mathrm{Q}))$$\end{center}
Where $x(\mathrm{P}).x(\mathrm{Q})= t_1\alpha_1+t_2\alpha_2+t\hdots+ t_n\alpha_n$
\end{definition}
\begin{lemma}
Let $\mathcal{E}$ be an elliptic curve over the finite field $\mathbb{F}_{p^n}$ and let $\mathcal{P},\mathcal{Q}$ be two subgroups of $\mathcal{E}(\mathbb{F}_{p^n})$. Let denote $|\mathcal{P}|=q_1$ and $|\mathcal{Q}|=q_2$. Let $U_\mathcal{P}$ and $U_\mathcal{Q}$ be two random variables uniformly distributed in $\mathcal{P}$ and $\mathcal{Q}$ respectively. Let $U_k$ be the uniform distribution in $\{0,1\}^k$. Then\\
\begin{center}$\displaystyle \Delta(Extrac_k(U_\mathcal{P},U_\mathcal{Q}),U_k)<<\sqrt{\frac{p^{n+k}}{4q_1q_2}}=2^{\frac{km+nm-(l_1+l_2+2)}{2}}$\end{center}

\end{lemma}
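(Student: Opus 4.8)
The plan is to mirror the two proofs immediately preceding: combine the additive-character machinery over $\mathbb{F}_{p^n}$ used for $F_k$ (Lemma \ref{lem6}) with the elliptic-curve bilinear sum bound exploited for $extrac_k$ (Lemma \ref{lemEC}). Concretely, I would bound the collision probability $Col(Extrac_k(U_\mathcal{P},U_\mathcal{Q}))$ from above by a quantity of the shape $\frac{1}{p^k}+\frac{p^n}{q_1q_2}$, and then convert this into a bound on the statistical distance through Lemma \ref{lem1}.

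First I would fix the basis $\{\alpha_1,\dots,\alpha_n\}$ of $\mathbb{F}_{p^n}$ over $\mathbb{F}_p$ and write $x(\mathrm{P})\,x(\mathrm{Q})=t_1\alpha_1+\dots+t_n\alpha_n$, so that $Extrac_k$ reads off the first $k$ coordinates. A collision $Extrac_k(\mathrm{P},\mathrm{Q})=Extrac_k(\mathrm{A},\mathrm{B})$ then means that $x(\mathrm{P})x(\mathrm{Q})-x(\mathrm{A})x(\mathrm{B})$ lies in the additive subgroup
$$R=\{t_{k+1}\alpha_{k+1}+\dots+t_n\alpha_n \ :\ t_i\in\mathbb{F}_p\},$$
of cardinality $p^{n-k}$. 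Introducing the bilinear elliptic-curve sum $V(\psi,\mathcal{P},\mathcal{Q})=\sum_{\mathrm{P}\in\mathcal{P}}\sum_{\mathrm{Q}\in\mathcal{Q}}\psi(x(\mathrm{P})x(\mathrm{Q}))$ together with the indicator
$$\mathbf{1}=\frac{1}{p^n}\sum_{a\in\mathbb{F}_{p^n}}\psi\bigl(a(x(\mathrm{P})x(\mathrm{Q})-x(\mathrm{A})x(\mathrm{B})-r)\bigr),$$
built from the orthogonality of the additive character exactly as in Lemma \ref{lem6}, I would express $Col$ as a quadruple sum over $\mathcal{P},\mathcal{Q},\mathcal{P},\mathcal{Q}$, together with sums over $r\in R$ and $a\in\mathbb{F}_{p^n}$.

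Next I would split according to whether $a=0$ or $a\neq 0$. The term $a=0$ contributes $\frac{|R|}{p^n}=\frac{1}{p^k}$, as in Lemma \ref{lem6}. For $a\neq 0$ I would factor the quadruple sum into $V(\psi_a,\mathcal{P},\mathcal{Q})\,\overline{V(\psi_a,\mathcal{P},\mathcal{Q})}=|V(\psi_a,\mathcal{P},\mathcal{Q})|^2$, where $\psi_a(z)=\psi(az)$, times $\sum_{r\in R}\psi(-ar)$. Applying Lemma \ref{lem7} with $\mathrm{R}=q_1$ and $\mathrm{T}=q_2$ bounds $|V(\psi_a,\mathcal{P},\mathcal{Q})|^2\ll p^n q_1 q_2$ uniformly in $a\neq 0$, and the Winterhof inequality (Theorem \ref{winter}) applied to $R$ gives $\sum_{a\in\mathbb{F}_{p^n}}\bigl|\sum_{r\in R}\psi(-ar)\bigr|\leq p^n$. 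Combining these yields $Col(Extrac_k(U_\mathcal{P},U_\mathcal{Q}))\leq \frac{1}{p^k}+\frac{p^n}{q_1q_2}$. Feeding this into Lemma \ref{lem1} through $Col=\frac{1+4\Delta^2}{2^k}$, and using that $p>5$ makes $\frac{2^k}{p^k}-1$ non-positive while $2^k\leq p^k$, produces $\Delta\ll\sqrt{\frac{p^{n+k}}{4q_1q_2}}$, which is the claimed bound.

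The step I expect to be the main obstacle is the legitimate invocation of Lemma \ref{lem7}. That estimate is stated for sums $\sum\sum\sigma(\mathrm{P})\mathrm{v}(\mathrm{Q})\psi(x(\mathrm{P}\oplus\mathrm{Q}))$ involving the $x$-coordinate of the group sum $\mathrm{P}\oplus\mathrm{Q}$, whereas $Extrac_k$ uses the field product $x(\mathrm{P})\,x(\mathrm{Q})$ of the two $x$-coordinates. I would need to confirm that the same square-root-cancellation bound persists for the product form and, crucially, that it holds \emph{uniformly} over all twisted characters $\psi_a$ with $a\in\mathbb{F}_{p^n}^*$, since the Winterhof averaging over $a$ is valid only if $|V(\psi_a,\mathcal{P},\mathcal{Q})|^2\ll p^n q_1 q_2$ with a constant independent of $a$. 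Establishing this uniform bilinear bound for $\psi(a\,x(\mathrm{P})x(\mathrm{Q}))$ is where the genuine work lies; the remaining manipulations are routine analogues of Lemmas \ref{lem6} and \ref{lemEC}.
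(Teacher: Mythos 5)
Your proposal follows essentially the same route as the paper: the paper's own proof is a single line stating that the result follows from Lemma \ref{lem7} and Theorem \ref{winter} by the same sketch as Lemma \ref{lem6}, which is exactly the argument you spell out (indicator via additive characters, $a=0$ term giving $1/p^k$, Winterhof over $R$, the bilinear elliptic-curve bound for $a\neq 0$, then Lemma \ref{lem1}). The mismatch you flag --- Lemma \ref{lem7} being stated for $\psi(x(\mathrm{P}\oplus\mathrm{Q}))$ while the extractor needs a uniform bound for $\psi(a\,x(\mathrm{P})x(\mathrm{Q}))$ --- is a genuine gap, but it is one the paper itself leaves unaddressed, so your expansion is faithful to (and more explicit than) the published argument.
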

\begin{proof}
Using Lemma \ref{lem7} and Theorem \ref{winter}, the sketch of the proof is the same as those of Lemma \ref{lem6}
\end{proof}

\section{Application}
The first most well-known and use tools for the extraction phase of a key exchange protocol in order to create a secure chanal are hash function. Hash functions are the most aften adopted solution because of their flexibility and efficiency. However, they have a significant drawback. That is, the validity of this technique holds in the random oracle model only.\\
Definitely the truncation of the bit-string of the random element is the most efficient randomness extractor, since it is deterministic and does not require any computation.\\
The interest of studying randomness extraction has several cryptographic applications specially the randomness extraction from a point of elliptic curve. Some of these various applications are find as we have already said in key derivation function, key exchange protocols\cite {Diffie}, design cryptographically secure pseudorandom number generator\cite {Trevisan2}.\\
Today the trend is towards cryptography identification and pairing on elliptic and hyperelliptic curves are widely used in this field, especially for key exchange between three entities and for authentication. Nevertheless, we find that the tools used in most of the protocols based on the pairing, in this case for authentication using hash functions in the extraction phase. The extractor on two sources would be good candidates to replace these functions. They are not only deterministic but also offer the possibility of increasing the randomness considering either one but two sources.% taken into account by the pairing function.

\end{document}